\numberwithin{equation}{section}
\newcommand{\MCR}{\mathcal{R}}
\newcommand{\MCM}{\mathcal{M}}
\newcommand{\MCF}{\mathcal{F}}
\newcommand{\MCO}{\mathcal{O}}
\newcommand{\MCD}{\mathcal{D}}
\newcommand{\MCL}{\mathcal{L}}
\newcommand{\EE}{\mathbb{E}}
\newcommand{\PP}{\mathbb{P}}
\newcommand{\RR}{\mathbb{R}}
\newcommand{\NN}{\mathbb{N}}
\newcommand{\Ltx}{\mathcal{L}_{t,x}}
\newcommand{\Vf}{V^{\eps,\delta}}
\newcommand{\vz}{v^{(0)}}
\newcommand{\pz}{{\pi^{(0)}}}
\newcommand{\wth}{w^{(3)}}
\newcommand{\Vfull}{V^{\eps,\delta}}
\newcommand{\vzo}{v^{(0,1)}}
\newcommand{\voz}{v^{(1,0)}}
\newcommand{\vztw}{v^{(0,2)}}
\newcommand{\voo}{v^{(1,1)}}
\newcommand{\vtz}{v^{(2,0)}}
\newcommand{\vthz}{v^{(3,0)}}
\newcommand{\vto}{v^{(2,1)}}
\newcommand{\wtz}{w^{(2,0)}}
\newcommand{\wthz}{w^{(3,0)}}
\newcommand{\wto}{w^{(2,1)}}
\newcommand{\aves}{\overline\lambda}
\newcommand{\eps}{\varepsilon}
\newcommand{\abs}[1]{\left|#1\right|}
\newcommand{\average}[1]{\left\langle#1\right\rangle}
\newcommand{\ud}{\,\mathrm{d}}
\newcommand{\half}{\frac{1}{2}}
\newtheorem{theo}{Theorem}[section]
\newtheorem{lem}[theo]{Lemma}
\newtheorem{rem}[theo]{Remark}
\newtheorem{prop}[theo]{Proposition}
\newtheorem{assump}[theo]{Assumption}
\newtheorem{defi}[theo]{Definition}
\begin{document}

\title{
Sub- and Super-solution Approach to Accuracy Analysis of Portfolio Optimization Asymptotics in Multiscale Stochastic Factor Markets}
\author{Jean-Pierre Fouque\thanks{Department of Statistics \& Applied Probability,
 University of California,
        Santa Barbara, CA 93106-3110, {\em fouque@pstat.ucsb.edu}. }
        \and Ruimeng Hu\thanks{Department of Mathematics and Department of Statistics \& Applied Probability, University of California,
        	Santa Barbara, CA 93106-3080,  {\em rhu@ucsb.edu}. }
        \and Ronnie Sircar\thanks{ORFE Department, Princeton University, Sherrerd Hall, Princeton NJ 08544; {\em sircar@princeton.edu.}  }
        }  
\date{\today}
\maketitle

\begin{abstract}
	
The problem of portfolio optimization when stochastic factors drive returns and volatilities has been studied in previous works by the authors. In particular, they proposed asymptotic approximations for value functions and optimal strategies in the regime where these factors are running on both slow and fast timescales. However, the rigorous justification of the accuracy of these approximations has been limited to power utilities and a single factor. In this paper, we provide an accurate analysis for cases with general utility functions and two timescale factors by constructing sub- and super-solutions to the fully nonlinear problem so that their difference is at the desired level of accuracy. This approach will be valuable in various related stochastic control problems.

\end{abstract}

{\bf AMS subject classification}: 91G10, 93E20, 60H30, 35C20 \\

%{\bf JEL subject classification} \

{\bf Keywords}: portfolio optimization, utility maximization, stochastic volatility, rigorous asymptotics, sub-solution, super-solution %\\

\section{Introduction}

The study of the portfolio optimization problem in continuous time has a long history dating back to \cite{Me:69,Me:71}. Specifically, an agent aims to maximize her expected utility of the terminal wealth when the investment is divided between risky assets and a riskless asset, with particular types of utility functions.  In Merton's seminal work, the risky assets are assumed to follow the Black-Scholes model, where the expected returns and volatilities are constant. Since then, the problem has been studied extensively in various settings and levels of generality, including considering transaction costs \cite{DaNo:90,GuMu:13}, stochastic volatility models for risky assets (see, for instance \cite{FoSiZa:13}, and references therein), and various trading constraints (see \cite{rogers2013optimal} for a survey).

In this paper, we consider the portfolio optimization problem in a stochastic environment, where both the expected return $\mu$ and volatility $\sigma$ of a stock price $S$ are driven by two factors $Y_t$ and $Z_t$ evidenced in empirical study \cite{FoPaSiSo:11}:
\begin{equation}
\frac{\ud S_t}{S_t} = \mu(Y_t, Z_t) \ud t + \sigma(Y_t, Z_t) \ud W_t.
\end{equation}
Here $W$ is a standard Brownian motion.
The two factors are characterized by the small parameters $\eps$ and $\delta$, the fast timescale being represented by $\eps$, and the slow timescale by $1/\delta$. Along this direction, results have been developed in \cite{FoSiZa:13} about asymptotic expansions of the optimal trading strategy and the maximized utility when the utility function is general, as both $\eps$ and $\delta$ tend to zero.  

However, the accuracy of approximation was rigorously justified only in the cases with power utilities and one stochastic factor, when a distortion transformation of Hopf-Cole type introduced in \cite{zariph2001} enables a reduction to a {\em linear} PDE problem. This is only possible in the case of multiple factors in special cases; see \cite{avshsi}.
After that,  several attempts have been made to partially justify these expansions in more general cases, {\it e.g.},  in \cite{fouque2017asymptotic,hu2018asymptotic,FoHu:20}; see also \cite{huthesis} for a comprehensive review. 
The technique of sub- and super-solutions to prove the accuracy of asymptotic approximations is used in \cite{bichuch2019optimal2} for a model of power utility maximization with two nearly correlated factors corresponding to a regular perturbation problem, and in \cite{bichuch2019optimal} for optimal investment under stochastic volatility and transaction costs. 
The contribution of this paper is to rigorously justify the heuristic expansion provided in \cite{FoSiZa:13}. The methodology presented here can be adapted to the derivation of accuracy results in various contexts, as in \cite{fouque2021optimal} for instance, where fast mean-reversion is shown in data and the corresponding control problem appears as a Ricatti equation with fast mean-reverting random coefficients. Proof of accuracy of an approximate solution to this singular perturbation problem was obtained by constructing sub- and super-solutions.

We shall construct two functions close to the value function %$\Vfull$ 
of the problem. These two functions act as lower and upper bounds, namely sub- and super- solutions of this problem. By requiring certain properties for each function and, most importantly, by constructing their difference to reach the desired order, one can show that the asymptotic approximations derived in \cite{FoSiZa:13} are rigorous under general utility. The requirement for the sub-solution is rather relaxed, and we shall work with a particular zeroth-order strategy. Since the super-solution acts as an upper bound of the problem, we need to ensure the property holds for all admissible strategies. 

The rest of this paper is organized as follows. In Section~\ref{sec:mainidea}, we first introduce the Merton problem in a multiscale stochastic environment and the associated Hamilton-Jacobi-Bellman (HJB) equation, and we describe the main result of the paper, Theorem~\ref{thm:main}. In Section~\ref{sec:approach}, we briefly explain our approach which consists in constructing sub- and super-solutions so that their difference is at the desired level. The asymptotic approximations derived in \cite{FoSiZa:13} are reviewed in Section~\ref{sec:existing}. Section~\ref{sec:assump} summarizes standing assumptions in this paper and some preliminary estimates which facilitate the asymptotic analysis in Section~\ref{sec:derivation}. Section~\ref{sec:derivation} is dedicated to the construction of the sub- and super-solutions, thus completing the proof of Theorem~\ref{thm:main}. We make concluding remarks in Section~\ref{sec:conclusion}.

\section{Merton Problem under Multiscale Stochastic Environment}\label{sec:mainidea}

We consider the utility maximization problem on the finite horizon $[0,T]$ with general terminal utility $U$, where the (single) underlying asset $S_t$ is driven by two factors: one fast mean-reverting $Y_t$, and one slowly varying $Z_t$:
\begin{align}\label{eq_St}
&\ud S_t = \mu(Y_t, Z_t)S_t\ud t + \sigma(Y_t, Z_t)S_t\ud W_t, \\
&\ud Y_t = \frac{1}{\eps}b(Y_t) \ud t + \frac{1}{\sqrt{\eps}}a(Y_t) \ud W_t^Y, \label{eq_Yt} \\
&\ud Z_t = \delta c(Z_t) \ud t + \sqrt\delta g(Z_t)\ud W_t^Z \label{eq_Zt},
\end{align}
Here $W, W^Y, W^Z$ are standard Brownian motions on a filtered probability space $(\Omega, \MCF, \PP)$ that are correlated:
\begin{equation}
\ud \average{W, W^Y}_t = \rho_1 \ud t, \quad \ud \average{W, W^Z}_t = \rho_2 \ud t, \quad \ud \average{W^Y, W^Z}_t = \rho_{12} \ud t,
\end{equation}
where $\abs{\rho_1} < 1$, $\abs{\rho_2} < 1$, $\abs{\rho_{12}}<1$ and $1 + 2\rho_1\rho_2\rho_{12} - \rho_1^2 - \rho_2^2 - \rho_{12}^2 >0$ to ensure the positive definiteness of the covariance matrix of $(W, W^Y, W^Z)$.  The time scales of $Y_t$ and $Z_t$ are described by the small positive parameters $\eps$ and $\delta$ respectively. We shall assume that the process $Y_t $ is ergodic and has a unique invariant distribution $\Phi$ which is independent of $\eps$ as its infinitesimal generator is of the form $\eps^{-1}{\cal L}_y$.
Further assumptions on the model parameters $\mu, \sigma, b, a, c, g$ will be list in Section~\ref{sec:assump}. In particular, they will imply that the system \eqref{eq_St}-\eqref{eq_Zt} has a unique strong solution.

Let $X_t^\pi$ be a wealth process with $\pi_t$ being the dollar amount invested in the underlying asset at time $t$ and the remaining held in a money market earning interest at rate $r$. Under the self-financing assumption, and without loss of generality, assuming $r=0$, the wealth process follows:
\begin{equation}\label{eq:wealth}
\ud X_t^\pi = \pi_t[\mu(Y_t, Z_t) \ud t + \sigma(Y_t, Z_t) \ud W_t].
\end{equation}
We are interested in the utility maximization of the terminal wealth:
\begin{equation*}
\sup_\pi \EE[U(X_T^\pi)], 
\end{equation*}
for all admissible strategies $\pi$ (see Definition~\ref{def:admissible}), and where $U(\cdot)$ is a general utility function satisfying Assumption~\ref{assump_U}. 

Restricting the problem to Markovian strategies of the form $\pi(t,x,y,z)$, we denote by $\Vfull(t, x, y, z)$ the value function 
\begin{equation}\label{def:Vfull}
\Vfull(t,x,y,z) := \sup_\pi \EE_t[U(X_T^\pi)], \quad \EE_t[\cdot] = \EE[\cdot | X_t^\pi = x, Y_t = y, Z_t = z].
\end{equation}
As described in \cite{FoSiZa:13},  the Hamilton-Jacobi-Bellman (HJB) PDE associated to this problem is,
\begin{equation}\label{eq:Vfull}
\sup_\pi Q^\pi[\Vfull](t, x, y, z) = 0, \quad \Vfull(T, x, y, z) = U(x),
\end{equation}
where $Q^\pi =\partial_t+ {\cal L}^\pi$ and ${\cal L}^\pi$ is the infinitesimal generator of $(X^\pi_t, Y_t, Z_t)$ for any Markovian strategy $\pi(t, x, y, z)$. Specifically,
\begin{align}\label{def_Q}
Q^{\pi}&=\partial_t+ \frac{1}{\eps} \MCL_y + \sqrt{\frac{\delta}{\eps}}\MCM_{yz} +  \delta\MCL_z
+ \half\pi^2\sigma(y, z)^2\partial_{xx} \\
&+ \pi\left[\mu(y,z) \partial_x +\frac{1}{\sqrt{\eps}} \rho_1a(y)\sigma(y,z) \partial_{xy} +  \sqrt{\delta}\rho_2g(z)\sigma(y, z)\partial_{xz}\right],
\end{align}
where 
\begin{equation}
\MCL_y = b(y) \partial_y + \half a^2(y) \partial_{yy}, \quad \MCL_z = c(z) \partial_z + \half g^2(z) \partial_{zz}, \quad \MCM_{yz} = \rho_{12}a(y)g(z)\partial_{yz}.
\end{equation}
We stress the fact that, in general, it is not known if the PDE \eqref{eq:Vfull} %doesn't 
admits a classical solution.

The formal asymptotic expansion in the regime where $(\eps, \delta)$ are both small, performed in  \cite[Section~4]{FoSiZa:13}, 
shows 
that $ \Vfull \approx \vz + \sqrt{\eps}\voz + \sqrt{\delta}\vzo$ where $\vz$, $\voz$ and $\vzo$ are functions of $(t,x,z)$ given here in Section~\ref{sec:existing}.  
The main result of this paper is to  justify this asymptotic expansion, \emph{i.e.}, to prove the following theorem.

\begin{theo}\label{thm:main}
Under Assumptions~\ref{assump_U}--\ref{assump_paper}, the following accuracy estimate for $\Vfull$ defined in \eqref{def:Vfull} holds:
\begin{equation}\label{eq:mainexpansion}
\abs{\Vfull - \left(\vz + \sqrt{\eps}\voz + \sqrt{\delta}\vzo\right)}(t, x, y, z) \leq  \MCO(\eps + \delta),
\end{equation}
for fixed $(t, x, y, z) \in [0,T] \times \RR^+ \times \RR^2$ and sufficiently small $(\eps, \delta)$, where $\vz$, $\voz$ and $\vzo$ will be given in Section~\ref{sec:existing}. 

Additionally, the strategy $\pz$ given by \eqref{def:pz} is asymptotically optimal in the sense: 
\begin{equation}\label{eq:pi0accuracy}
\abs{\Vfull (t,x,y,z)-\EE_t[U(X_T^\pz)] } \leq  \MCO(\eps + \delta).
\end{equation}

\end{theo}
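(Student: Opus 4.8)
The plan is to bracket $\Vfull$ between a sub-solution $V^{-}$ and a super-solution $V^{+}$, each agreeing with the principal expansion $\vz+\sqrt{\eps}\,\voz+\sqrt{\delta}\,\vzo$ up to $\MCO(\eps+\delta)$ and differing from it only through next-order correctors and a single safety term. Concretely, I would take
\[
V^{\pm}(t,x,y,z) \;=\; \vz+\sqrt{\eps}\,\voz+\sqrt{\delta}\,\vzo+\eps\,\vtz+\sqrt{\eps\delta}\,\voo+\delta\,\vztw \;\pm\; (\eps+\delta)\,\phi(t,x,z),
\]
where $\vtz,\voo,\vztw$ are the order-$\eps$, order-$\sqrt{\eps\delta}$ and order-$\delta$ correctors produced by the formal expansion of \eqref{eq:Vfull} --- each solving either a linear Poisson equation against $\MCL_y$ (using ergodicity of $Y$) or a linear parabolic equation in $(t,x,z)$ driven by the Merton operator of $\vz$ --- and $\phi\ge 0$ is a bounded, $x$-concave solution of an auxiliary linear inequality, chosen large enough to dominate the leftover PDE residual. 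Two features make this the right object: first, $V^{+}-V^{-}=2(\eps+\delta)\phi=\MCO(\eps+\delta)$, already at the claimed order; second, by the preliminary estimates of Section~\ref{sec:assump} the correctors and the relevant derivatives have $x$-growth controlled uniformly in $(\eps,\delta)$ and are bounded in $(y,z)$, so for $(\eps,\delta)$ small $V^{\pm}$ inherit the strict $x$-concavity of $\vz$, while $V^{+}(T,\cdot)\ge U\ge V^{-}(T,\cdot)$ follows from $\phi\ge 0$ and the terminal conditions of the correctors.

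Next I would verify the differential inequalities. For the super-solution one needs $Q^{\pi}[V^{+}]\le 0$ for \emph{every} admissible $\pi$: since $V^{+}$ is $x$-concave, $\pi\mapsto Q^{\pi}[V^{+}]$ is a concave quadratic whose maximizer $\pi^{*}[V^{+}]$ differs from $\pz$ by $\MCO(\sqrt{\eps}+\sqrt{\delta})$, and substituting $\pi^{*}$ and cancelling, in succession, the $\MCO(1)$, $\MCO(\sqrt{\eps})$, $\MCO(\sqrt{\delta})$, $\MCO(\eps)$, $\MCO(\sqrt{\eps\delta})$ and $\MCO(\delta)$ contributions by means of the PDEs satisfied by $\vz,\voz,\vzo,\vtz,\voo,\vztw$ leaves a remainder of the form $(\eps+\delta)\big[\,r(t,x,y,z)-\MCA\phi\,\big]$ with $r$ bounded on the relevant region and $\MCA$ a linear parabolic operator; choosing $\phi$ to solve $\MCA\phi\ge r$ forces this remainder to be $\le 0$. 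As noted in the introduction, the requirement on $V^{-}$ is weaker: it suffices that $Q^{\pz}[V^{-}]\ge 0$ for the single strategy $\pz$, which follows from the same cancellations with the sign of $\phi$ reversed, so no supremum over $\pi$ is needed.

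With $V^{\pm}$ in hand, the comparison step is a martingale argument rather than a PDE comparison principle, since \eqref{eq:Vfull} need not have a classical solution. For arbitrary admissible $\pi$, apply It\^o's formula to $s\mapsto V^{+}(s,X_s^{\pi},Y_s,Z_s)$ on $[t,T]$ and localize by stopping times $\tau_n\uparrow T$: the drift is $Q^{\pi}[V^{+}]\le 0$ and the stochastic integral is a martingale up to $\tau_n$, so $V^{+}(t,x,y,z)\ge\EE_t[V^{+}(\tau_n,X_{\tau_n}^{\pi},Y_{\tau_n},Z_{\tau_n})]$; passing to the limit $n\to\infty$ using the $x$-growth of $V^{+}$ and the integrability built into the admissibility class of Definition~\ref{def:admissible} gives $V^{+}(t,x,y,z)\ge\EE_t[V^{+}(T,X_T^{\pi},Y_T,Z_T)]\ge\EE_t[U(X_T^{\pi})]$, and taking the supremum over $\pi$ yields $V^{+}\ge\Vfull$. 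Symmetrically, It\^o applied with the specific strategy $\pz$ together with $Q^{\pz}[V^{-}]\ge 0$ gives $V^{-}\le\EE_t[U(X_T^{\pz})]\le\Vfull$. Hence $0\le\Vfull-V^{-}\le V^{+}-V^{-}=\MCO(\eps+\delta)$ and $0\le V^{+}-\Vfull\le\MCO(\eps+\delta)$; since $V^{\pm}$ differ from $\vz+\sqrt{\eps}\,\voz+\sqrt{\delta}\,\vzo$ by $\MCO(\eps+\delta)$, estimate \eqref{eq:mainexpansion} follows, and the chain $V^{-}\le\EE_t[U(X_T^{\pz})]\le\Vfull\le V^{+}$ gives $|\Vfull-\EE_t[U(X_T^{\pz})]|\le V^{+}-V^{-}=\MCO(\eps+\delta)$, i.e.\ \eqref{eq:pi0accuracy}.

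The main obstacle I anticipate is not the bookkeeping of the expansion but securing the estimates that make the construction legitimate: one must show that $\voz,\vzo,\vtz,\voo,\vztw$ and enough of their $x$- and $z$-derivatives exist with polynomial $x$-growth \emph{independent} of $(\eps,\delta)$, so that $V^{\pm}$ stay $x$-concave for small parameters and the It\^o and integrability arguments are genuinely uniform; this rests on regularity of the Merton value function $\vz$ and on the centering/ergodic properties of $\MCL_y$ used to solve the Poisson equations, i.e.\ on Assumptions~\ref{assump_U}--\ref{assump_paper}. A second delicate point is the cross-scale term $\sqrt{\delta/\eps}\,\MCM_{yz}$ in \eqref{def_Q}: acting on correctors carrying factors of $\sqrt{\delta}$ it can generate contributions of order $\delta/\sqrt{\eps}$, so the correctors --- in particular the mixed one $\voo$ --- must be organized so that such terms cancel or are reabsorbed into the $\MCO(\eps+\delta)$ remainder; getting this accounting right is where the two-factor, two-timescale structure is genuinely harder than the single-factor case treated before.
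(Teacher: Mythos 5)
Your overall architecture is the paper's: bracket $\Vfull$ by a sub-solution checked only along $\pz$ and a super-solution checked for all admissible $\pi$, verify terminal ordering and martingality, and read off both \eqref{eq:mainexpansion} and \eqref{eq:pi0accuracy} from $V^-\le \EE_t[U(X_T^\pz)]\le\Vfull\le V^+$ with $V^+-V^-=\MCO(\eps+\delta)$. But the concrete construction of $V^\pm$ has two gaps that would make the argument fail. First, your corrector list $\{\vtz,\voo,\vztw\}$ is the wrong one: the PDEs defining $\voz$ and $\vzo$ only cancel the \emph{$y$-averages} of the $\MCO(\sqrt\eps)$ and $\MCO(\sqrt\delta)$ contributions; pointwise in $y$ there remain centered terms such as $\Ltx(\lambda(y,z))\voz+\rho_1 a\lambda D_1\wtz_y$, which are of order $\sqrt\eps$ (resp.\ $\sqrt\delta$) and sign-indefinite. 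Killing them requires the $y$-dependent correctors at orders $\eps^{3/2}$ and $\eps\sqrt\delta$ (the paper's $\wthz$ and $\wto$, solving Poisson equations in $y$), which you omit; with your $V^\pm$ the residual after cancellation is $\MCO(\sqrt\eps+\sqrt\delta)$, not $(\eps+\delta)[r-\MCA\phi]$, so no choice of $\phi$ at order $\eps+\delta$ can restore the sign conditions.

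Second, the safety term $(\eps+\delta)\phi$ with $\phi(t,x,z)$ \emph{bounded} cannot do the job even at the right order. The surviving residuals and the nonzero terminal values of the correctors (e.g.\ $\wtz(T,x,y,z)=-\tfrac12\theta\,D_1\vz$ with $D_1\vz=R\,\vz_x$) are bounded only by multiples of $D_1\vz$, which is unbounded in $x$; this is why the paper takes $N_i=C_iD_1\vz$ and multiplies by $(2T-t)$, exploiting $\Ltx(\aves)D_1\vz=D_1\Ltx(\aves)\vz=0$ so that the safety term contributes exactly $+C_iD_1\vz$ with the desired sign and survives at $t=T$. Moreover, because $\Ltx(\lambda(y,z))$ has $y$-dependent coefficients, the safety term itself generates $y$-fluctuating contributions proportional to $C_i$ (through $(\lambda^2-\aves^2)(\tfrac12 D_2+D_1)C_iD_1\vz$); these cannot be beaten by ``choosing $\phi$ large'' — that is circular — and must instead be reabsorbed into higher-order Poisson correctors in $y$ (the paper's $F,G,H$ at orders $\eps^2$, $\eps\delta$, $\eps^{3/2}\sqrt\delta$, whose sources contain precisely the $C_i$-proportional centered terms). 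Your write-up correctly flags the cross term $\sqrt{\delta/\eps}\,\MCM_{yz}$ and the need for uniform derivative estimates, but without the $D_1\vz$-shaped safety terms, the extra correctors $\wthz,\wto$, and the $C_i$-dependent correctors $F,G,H$, neither \ref{r2}/\ref{r4} nor the terminal comparisons \ref{r1}/\ref{r3} can be closed.
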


\subsection{Methodology}\label{sec:approach}

The method for proving Theorem~\ref{thm:main} is to construct two functions $V^\pm(t,x,y,z)$ as sub- and super-solutions, whose asymptotic expanded terms of the orders up to $\MCO(\sqrt\eps + \sqrt{\delta})$ coincide with $\vz + \sqrt{\eps}\voz + \sqrt{\delta} \vzo$. Specifically, we aim to find  a function $V^-$ for the sub-solution such that, for all $(x, y, z)$ and sufficiently small $(\eps, \delta)$, the following requirements are satisfied:
\begin{enumerate}[ label=(\textbf{R\arabic*})]
	\item\label{r1} The function value $V^-(T,x,y,z)$ is dominated by $U(x)$;
	\item\label{r2} The process $V^-(t, X_t^\pz, Y_t, Z_t)$ along a zero-order strategy, denoted by $\pz$, is a submartingale.
\end{enumerate} 

Thus, by the definition of $\Vfull(t, x, y, z)$, \ref{r1} and \ref{r2}, one can obtain:
\begin{equation}
\Vfull(t,x,y,z)  \geq \EE_t[U(X_T^\pz)] \geq  \EE_t[V^-(T, X_T^\pz, Y_T, Z_T)] 
 \geq V^-(t,x,y,z). \label{eq:subineq}
\end{equation}
Then, we aim to find  a function $V^+$ for the super-solution such that, for all $(x, y, z)$ and sufficiently small $(\eps, \delta)$, the following requirements are satisfied:
\begin{enumerate}[ label=(\textbf{R\arabic*}), resume]
	\item\label{r3} The function value $V^+(T,x,y,z)$ dominates $U(x)$;
	\item\label{r4} $\widehat Q[V^+](t,x,y,z) := \sup_\pi Q^\pi[V^+](t,x,y,z)$ exists and is non-positive;
%	\item $V^+_{xx} < 0$, for all $(t, x,y,z)$ and $\eps \in [0, \overline \eps], \delta \in [0, \overline \delta]$;
	\item\label{r5} The It\^{o} integrals $\int_0^t V^+_x \pi\sigma(Y_s, Z_s) \ud W_s$, $\int_0^t V^+_y  \frac{1}{\sqrt{\eps}}a(Y_s) \ud W_s^Y$ and  $\int_0^t V^+_z \sqrt{\delta} g(Z_s) \ud W_s^Z$ are true martingales, for any admissible $\pi$.
\end{enumerate}

Then, as in  the argument used for \eqref{eq:subineq}, one can deduce:
\begin{align}
\EE_t[U(X_T^\pi)] &\leq \EE_t[V^+(T, X_T^\pi, Y_T, Z_T)] \\
&= V^+(t,x,y,z) + \EE_t\left[\int_t^T Q^\pi[V^+](s, X_s^\pi, Y_s, Z_s) \ud s \right]  \\
& \quad + \EE_t\left[\int_t^T V^+_x \pi\sigma(Y_s, Z_s) \ud W_s  \right] +\EE_t\left[\int_t^T V^+_y  \frac{1}{\sqrt{\eps}}a(Y_s) \ud W_s^Y  \right]  \\
& \quad +\EE_t\left[\int_t^T V^+_z \sqrt{\delta} g(Z_s) \ud W_s^Z  \right]  \\
& \leq V^+(t,x,y,z) + \EE_t\left[\int_t^T \widehat Q[V^+](s, X_s^\pi, Y_s, Z_s) \ud s\right]  \leq V^+(t,x,y,z) \label{eq:superineq}.
\end{align}
Taking the supremum over all admissible $\pi$ on both sides of \eqref{eq:superineq} gives $\Vfull(t, x, y ,z) \leq V^+(t, x, y, z)$.
Combining \eqref{eq:subineq} and \eqref{eq:superineq} gives
\begin{align}
 \vz + \sqrt\eps \voz + \sqrt{\delta} \vzo + o(\sqrt{\eps} + \sqrt{\delta})  &= V^-(t,x,y,z) \leq \Vfull(t, x, y, z) \leq  V^+(t, x, y, z)  \\
 &= \vz + \sqrt\eps \voz + \sqrt{\delta} \vzo + o(\sqrt{\eps} + \sqrt{\delta}).
\end{align}
In fact, the next order terms after $\vz + \sqrt{\eps}\voz  + \sqrt{\delta}\vzo$ in the construction of $V^\pm$ are $\MCO(\eps + \delta)$. Therefore $o(\sqrt{\eps} + \sqrt{\delta})$ can be replaced by  $\MCO(\eps + \delta)$ and  \eqref{eq:mainexpansion} follows.

Our choice of $V^\pm$ takes the following form:
\begin{align}\label{def:Vpm}
V^\pm &= \vz + \sqrt\eps \voz + \sqrt{\delta}\vzo + \eps \wtz + \eps^{3/2} \wthz + \eps\sqrt{\delta}\wto \\
& \quad \pm (2T-t)(\eps N_A + \delta N_B + \sqrt{\eps\delta}N_C) \pm \eps^2 F \pm \eps^{3/2}\sqrt{\delta}H \pm \eps\delta G ,
\end{align}
where $(N_A, N_B, N_C)$ are functions of $(t,x,z)$, and $(F,G,H)$ are functions of $(t,x,y,z)$. The intuition for such form is the following: (a) functions $\wtz, \wthz, \wto$ are added to eliminate terms of $\MCO(1), \MCO(\sqrt{\eps})$ and $\MCO(\sqrt{\delta})$ when applying the operator $Q^\pi$ to $V^\pm$; (b) $(N_A, N_B, N_C)$ and $(F, G, H)$ helps to fulfill \ref{r2} and \ref{r4}; (c) the coefficient $2T-t$ is for \ref{r1} and \ref{r3}. In Section~\ref{sec:derivation}, we shall show how these functions are determined and why they can be chosen as functions of particular variables such that the requirements \ref{r1}--\ref{r5} are satisfied. In the rest of this section, we briefly review the existing derivations of $\vz, \voz, \vzo$, the definition of $\wtz, \wthz, \wto$, the standing assumptions in this paper, and some preliminary estimates.

\subsection{Multiscale  asymptotic expansions}\label{sec:existing}
Generally, closed-form solutions are barely available for HJB equations. In our setup, we do not even know if $\Vfull$ solves \eqref{eq:Vfull} in the viscosity sense. 
In \cite{FoSiZa:13}, a first-order expansion of $\Vf$ around small $(\eps, \delta)$ is formally derived via singular and regular perturbation techniques. Since the formulas of these terms and the equations they satisfy play an important role in proving our main theorem, we summarize them below for readers' convenience.  For detailed derivations, we refer the readers to \cite[Section~4]{FoSiZa:13}  and \cite[Sections~2 and 3]{FoHu:20}.

The combined expansion in slow and fast scales of $\Vf$ is of the following form:
\begin{equation}\label{p2_eq_Vfexpansion}
\Vf = \vz + \sqrt{\eps}\voz + \sqrt{\delta}\vzo + \eps\vtz + \delta \vztw + \sqrt{\eps\delta}\voo + \cdots,
\end{equation}
where the superscript of $v$ corresponds to the powers in $\sqrt\eps$ and $\sqrt\delta$ and where  $v^{(0,0)}$ has been rewritten as $v^{(0)}$.  To precisely give the equations which identify these terms, we introduce the following notations, following \cite{FoSiZa:13}. Denote by $\average{\cdot}$ the average with respect to the $\eps$-independent invariant distribution $\Phi$ of $Y$: $\average{g} = \int g(y) \, \Phi(\mathrm{d}y)$, and by $M(t, x; \lambda)$ the solution to the classical Merton PDE where $\mu$ and $\sigma$ are constants:
\begin{equation}\label{eq:classicalMerton}
M_t - \half \lambda^2 \frac{M_x^2}{M_{xx}} = 0, \quad M(T, x; \lambda) = U(x), \quad \lambda = \mu/\sigma \quad \mbox{(Sharpe ratio)}.
\end{equation}
We define the associate risk-tolerance function $$R(t, x; \lambda) := - \frac{M_x(t, x; \lambda)}{M_{xx}(t, x; \lambda)},$$ and the differential operators:
\begin{equation}
D_k(\lambda) = R(t, x; \lambda)^k \partial_x^k, \quad k = 1, 2, \cdots; \quad \Ltx(\lambda) = \partial_t + \half \lambda^2 D_2(\lambda) + \lambda^2 D_1(\lambda).
\end{equation}
We denote the ``square-averaged'' Sharpe ratio $\aves(z)=\sqrt{\average{\lambda^2(\cdot,z)}}$, and the version of $D_k(\lambda)$ that will be used in the sequel is $D_k(\aves) = R(t,x;\aves(z))^k \partial_x^k$. We shall use $D_k$ for brevity (omitting the argument $\aves$).
We also define the averaged Sharpe ratio: 
$\widehat \lambda(z) = \average{\lambda(\cdot,z)}$.
Now we are ready to present the formulations of $\vz$, $\voz$ and $\vzo$.

\begin{prop} (\cite[Section~4]{FoSiZa:13}  and \cite[Sections~2 \& 3]{FoHu:20})
\begin{enumerate}[label = (\roman*)]	
	\item The \emph{leading order term} $\vz$ is defined as the classical solution to the Merton PDE %associated with 	
	\begin{equation}\label{eq:vz}
	\vz_t - \half\aves^2(z)\frac{\left(\vz_x\right)^2}{\vz_{xx}} = 0, \quad \vz(T,x,z) = U(x).
	\end{equation}
	Since it  possesses a unique solution (see \cite[Proposition~2.2]{FoHu:20}), we have
	\begin{equation}\label{eq:vzvsmerton}
	\vz(t,x,z) = M(t,x;\aves(z)).
	\end{equation}	
	\item The \emph{first order correction} in the fast variable $\voz$ is defined as the solution to the linear PDE:
	\begin{equation}\label{eq:voz}
	\Ltx(\overline \lambda(z)) \voz 
	%= \voz_t + \half\aves^2(z)\left(\frac{\vz_x}{\vz_{xx}}\right)^2\voz_{xx}-\aves^2(z)\frac{\vz_x}{\vz_{xx}}\voz_{x} 
	= \half \rho_1B(z)D_1^2\vz, \qquad \voz(T,x,z) = 0,
	\end{equation}
	where $B(z) = \average{\lambda(\cdot,z)a(\cdot)\partial_y\theta(\cdot,z)}$, and $\theta$ is a solution of the Poisson equation 
	\begin{equation}\label{def:theta}
	\MCL_y\theta(y,z) = \lambda^2(y,z) - \aves^2(z). 
	\end{equation}
		It is explicitly given in terms of $\vz$ by
	\begin{equation}\label{def:voz}
	\voz(t,x,z) = -\half(T-t)\rho_1B(z)D_1^2\vz(t,x,z).
	\end{equation}

	\item The \emph{first order correction} in the slow variable $\vzo$ is defined as the solution to the linear PDE:
	\begin{equation}\label{eq:vzo}
	\Ltx(\overline{\lambda}(z))\vzo 
	%= \vzo_t + \half\aves^2(z)\left(\frac{\vz_x}{\vz_{xx}}\right)^2\vzo_{xx}-\aves^2(z)\frac{\vz_x}{\vz_{xx}}\vzo_x 
	= \rho_2\widehat \lambda(z)g(z)\frac{\vz_x}{\vz_{xx}}\vz_{xz} , 	\qquad\vzo(T,x,z) = 0,
	\end{equation}
and $\vzo$ can be expressed in terms of $\vz$ by
	\begin{align}
	\vzo(t,x,z) &= \half(T-t)\rho_2\widehat{\lambda}(z)g(z)D_1\vz_z(t,x,z)\\ 
	&=\half(T-t)^2\rho_2\widehat{\lambda}(z)\aves(z)\aves'(z)g(z)D_1^2\vz(t,x,z).\label{def:vzo}
	\end{align}
	
	\item The $z$-derivatives of the leading order term $\vz$ and the risk-tolerance function $R$ satisfy
	\begin{align}\label{eq:vegagamma}
	\vz_z(t,x,z) &= (T-t)\aves(z)\aves'(z)D_1\vz(t,x,z), \\ 
	R_z(t, x; \overline\lambda(z)) &= (T-t) \overline \lambda(z)\overline \lambda'(z) R^2 R_{xx}(t, x; \overline{\lambda}(z)).
	\end{align}
		
	\item The term $\vtz$ solves the linear PDE: $\MCL_y\vtz + \Ltx(\lambda(y,z))\vz=0$, %With equation \eqref{eq:vz}, $\vtz$
	 and so has the form
	\begin{equation}
	\vtz(t,x,y,z) = -\half \theta(y,z) D_1\vz + C_1(t, x, z),
	\end{equation}
	where $\theta$ solves \eqref{def:theta}.
	\item The term $\vthz$ solves the linear PDE:  $$\MCL_y \vthz + \Ltx(\lambda(y,z))\voz = \half \rho_1\lambda(y, z)a(y)\partial_y\theta(y)D_1^2\vz$$ and so has the form
	%. With \eqref{eq:vzvsmerton} and \eqref{def:voz}, we deduce:
	\begin{multline}
	\vthz(t,x,y,z) = \half (T-t) \rho_1\theta(y,z) B(z) (\half D_2 + D_1)D_1^2 \vz + \half \rho_1 \theta_1(y,z) D_1^2 \vz \\
	+ C_2(t, x, z),
	\end{multline}
	where $\theta_1(y, z)$ solves the Poisson equation
	%\begin{equation}\label{def:theta1}
	$\MCL_y \theta_1(y,z) = \lambda(y,z)a(y)\partial_y\theta(y,z) - B(z).$
	%\end{equation}
	
	\item The term $\vto$ solves the linear PDE: $$\MCL_y \vto + \Ltx(\lambda(y,z)) \vzo + \rho_2 g(z)\lambda(y,z)D_1\vz_z = 0. $$ With \eqref{eq:vzvsmerton} and \eqref{def:vzo}, $\vto$ is given by
	\begin{align}
	\vto(t,x,y,z) &= -\half (T-t)^2\theta(y,z) \rho_2 \widehat \lambda(z) \overline \lambda(z) \overline{\lambda}'(z)g(z)(\half D_2 + D_1)D_1^2 \vz \\
	&\quad - \rho_2(T-t)\theta_2(y,z)g(z) \overline{\lambda}(z)\overline{\lambda}'(z) D_1^2\vz + C_3(t, x, z), 
	\end{align}
	where $\theta_2(y,z)$ solves the Poisson equation
%	\begin{equation}\label{def:theta2}
$	\MCL_y \theta_2(y, z) = \lambda(y, z) - \widehat \lambda(z). $
%	\end{equation}
\end{enumerate}
\end{prop}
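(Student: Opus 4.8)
Since items (i)--(vii) are the formal output of the matched-asymptotics scheme of \cite[Section~4]{FoSiZa:13} and \cite[Sections~2 \& 3]{FoHu:20}, the plan is to verify that cascade in an organized way. First substitute the ansatz \eqref{p2_eq_Vfexpansion}, together with an expansion of the maximizer $\pi^* = \pz + \sqrt\eps\,\pi^{(1,0)} + \sqrt\delta\,\pi^{(0,1)} + \cdots$, into the HJB equation \eqref{eq:Vfull}. Performing the inner maximization first, the leading quadratic $\half\pi^2\sigma^2\vz_{xx} + \pi\mu\vz_x$ is maximized at $\pz = -\mu\vz_x/(\sigma^2\vz_{xx}) = \lambda(y,z)R/\sigma$, and by the envelope property the corrections $\pi^{(i,j)}$ with $(i,j)\neq(0,0)$ multiply the vanishing first-order condition, hence do not enter the value-function equations at the orders needed here; moreover $\half(\pz)^2\sigma^2\partial_{xx} + \pz\mu\,\partial_x = \half\lambda(y,z)^2 D_2 + \lambda(y,z)^2 D_1$, so together with $\partial_t$ these quadratic contributions assemble into $\Ltx(\lambda(y,z))$ acting on the various $v^{(i,j)}$. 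It then remains to collect powers of $\eps^{i/2}\delta^{j/2}$ in $Q^{\pi^*}[\Vf]$, using $Q^\pi$ from \eqref{def_Q}.

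Reading off the cascade: at order $\eps^{-1}$ one gets $\MCL_y\vz = 0$, so $\vz=\vz(t,x,z)$; at orders $\eps^{-1/2}$ and $\eps^{-1}\delta^{1/2}$, using $\partial_{xy}\vz = 0$ and $\MCM_{yz}\vz = 0$, one gets $\MCL_y\voz = 0$ and $\MCL_y\vzo = 0$, so $\voz$ and $\vzo$ are $y$-independent as well. At order $\eps^0\delta^0$ one obtains $\MCL_y\vtz + \Ltx(\lambda(y,z))\vz = 0$; the Fredholm solvability condition $\langle\,\cdot\,\rangle = 0$, together with $\langle\lambda^2\rangle = \aves^2$ and $D_2\vz = -D_1\vz$, gives the Merton PDE \eqref{eq:vz} for $\vz$ --- hence \eqref{eq:vzvsmerton} by \cite[Proposition~2.2]{FoHu:20} --- and solving the Poisson equation \eqref{def:theta} in $y$ yields $\vtz = -\half\theta D_1\vz + C_1(t,x,z)$, which is item (v). At order $\eps^{1/2}$, the cross term $\frac{1}{\sqrt\eps}\rho_1 a\sigma\,\pz\,\partial_{xy}$ acting on $\eps\vtz$ produces the source, giving $\MCL_y\vthz + \Ltx(\lambda)\voz = \half\rho_1\lambda a\,\partial_y\theta\,D_1^2\vz$; its solvability condition, with $\langle\lambda a\,\partial_y\theta\rangle = B(z)$, is \eqref{eq:voz}, and solving the residual Poisson equation --- introducing $\theta_1$ as the mean-zero solution of the equation in item (vi), whose centering condition holds by the definition of $B$ --- gives the stated form of $\vthz$. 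Similarly, at order $\delta^{1/2}$ the cross term $\sqrt\delta\rho_2 g\sigma\,\pz\,\partial_{xz}$ acting on $\vz$ and the term $\frac{1}{\eps}\MCL_y$ acting on $\eps\sqrt\delta\vto$ give $\MCL_y\vto + \Ltx(\lambda)\vzo + \rho_2 g(z)\lambda(y,z)D_1\vz_z = 0$; solvability with $\langle\lambda\rangle = \widehat\lambda$ is \eqref{eq:vzo}, and solving the residual Poisson equation --- introducing $\theta_2$ as in item (vii) --- gives the stated form of $\vto$.

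To pass from the linear source PDEs \eqref{eq:voz}, \eqref{eq:vzo} with zero terminal data to the closed forms \eqref{def:voz}, \eqref{def:vzo} and to \eqref{eq:vegagamma}, one uses three elementary facts about the Merton operator: (1) the commutation $\Ltx(\aves)D_1 = D_1\Ltx(\aves)$, whence $\Ltx(\aves)D_1^k\vz = 0$ for every $k$ since $\Ltx(\aves)\vz = 0$; (2) the time-weight identity $\Ltx(\aves)[(T-t)^m f] = -m(T-t)^{m-1}f + (T-t)^m\Ltx(\aves)f$; and (3) the ``Vega--Gamma'' relations of item (iv), obtained by differentiating the Merton PDE \eqref{eq:vz} and $R = -\vz_x/\vz_{xx}$ with respect to $z$ (equivalently with respect to $\lambda$, through $\aves(z)$). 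Using (3) to rewrite the $z$-derivatives of $\vz$ appearing in the sources --- e.g. $D_1\vz_z = (T-t)\aves\aves' D_1^2\vz$ --- reduces \eqref{eq:voz} and \eqref{eq:vzo} to equations of the form $\Ltx(\aves)\phi = (T-t)^n\kappa\,D_1^k\vz$ with $\kappa$ independent of $(t,x)$; then (1)--(2) give the solution with zero terminal data, $\phi = -\frac{1}{n+1}(T-t)^{n+1}\kappa\,D_1^k\vz$, which are precisely \eqref{def:voz} and the two lines of \eqref{def:vzo}. The $y$-independent functions $C_1, C_2, C_3$ are left undetermined here --- they are pinned down by the solvability conditions one order higher and do not affect the statement.

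The main obstacle is the two-parameter bookkeeping: one must check carefully which of the many terms in $Q^\pi$ contribute at each order $\eps^{i/2}\delta^{j/2}$, in particular that the mixed operator $\MCM_{yz}$ and the cross-derivatives $\partial_{xy},\partial_{xz}$ enter only through the already-identified lower-order $v^{(i,j)}$, and that the control corrections $\pi^{(i,j)}$ genuinely drop out by the envelope property. One also needs enough regularity of $M$ --- hence of $\vz$ and $R$, via Assumption~\ref{assump_U} and the standing assumptions of Section~\ref{sec:assump} --- together with ergodicity of $Y$ and at most polynomial growth of $\theta,\theta_1,\theta_2$, to make the operator identities (1)--(3) and the Poisson solvability conditions legitimate. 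Granting these, each of $\vz,\voz,\vzo,\vtz,\vthz,\vto$ in (i)--(vii) is forced, which completes the verification.
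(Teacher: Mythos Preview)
Your proposal is correct and is precisely the standard matched-asymptotics derivation that the paper has in mind. Note that the paper itself offers no proof of this proposition: it states the result and refers the reader to \cite[Section~4]{FoSiZa:13} and \cite[Sections~2 \& 3]{FoHu:20} for the derivation, so there is no ``paper's own proof'' to compare against beyond those citations. Your sketch --- substituting the ansatz into $\widehat Q[\Vf]=0$, using the first-order condition for $\pz$ and the envelope property to discard control corrections at the relevant orders, reading off the Poisson equations in $y$ at each order $\eps^{i/2}\delta^{j/2}$, invoking Fredholm solvability $\langle\cdot\rangle=0$ to obtain the PDEs for $\vz,\voz,\vzo$, and then closing via the commutation $\Ltx(\aves)D_1=D_1\Ltx(\aves)$ (which holds as an operator identity because $R$ satisfies $R_t+\tfrac12\aves^2 R^2R_{xx}=0$), the time-weight identity, and the Vega--Gamma relation --- is exactly the argument those references carry out.
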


In the sequel, when deriving the concrete formula of $V^\pm$, we shall choose $\wtz$, $\wthz$ and $\wto$ to be the corresponding terms in the expansion of $\Vfull$ with $C_i(t,x,z) \equiv 0$, $i = 1, 2, 3$. That is, we choose
\begin{align}
\wtz(t,x,y,z) &= -\half \theta(y,z) D_1\vz, \label{def:wtz}\\
\wthz(t,x,y,z) &= \half (T-t) \rho_1\theta(y,z) B(z) (\frac{D_2}{2} + D_1)D_1^2 \vz + \half \rho_1 \theta_1(y,z) D_1^2 \vz, \label{def:wthz}\\
\wto(t,x,y,z) &= -\half (T-t)^2\theta(y,z) \rho_2 \widehat \lambda(z) \overline \lambda(z) \overline{\lambda}'(z)g(z)(\frac{D_2}{2} + D_1)D_1^2 \vz \\
& \quad - \rho_2(T-t)\theta_2(y,z)g(z) \overline{\lambda}(z)\overline{\lambda}'(z) D_1^2\vz.  \label{def:wto}
\end{align}

\begin{comment}
 The coefficients of different orders are defined by:
\begin{align}
&\vz(t,x,z) = M(t,x;\overline \lambda(z)), \quad \overline \lambda^2(z) = \average{\lambda^2(\cdot, z)},\\
&\voz(t,x,z) = -\half (T-t)\rho_1 B(z) D_1^2\vz, \quad B(z) = \average{\lambda(\cdot, z)a(\cdot)\theta_y(\cdot, z)}, \\
&\vzo(t,x,z) = \half (T-t)^2 \rho_2 \widehat{\lambda}(z)\overline \lambda(z) \overline{\lambda}'(z) g(z) D_1^2\vz, \quad \widehat \lambda(z) = \average{\lambda(\cdot, z)},\\
&\wtz(t,x,y,z) = -\half \theta(y,z) D_1\vz,\\
&\wthz(t,x,y,z) = \half (T-t) \rho_1\theta(y,z) B(z) (\frac{D_2}{2} + D_1)D_1^2 \vz + \half \rho_1 \theta_1(y,z) D_1^2 \vz,\\
&\wto(t,x,y,z) = -\half (T-t)^2\theta(y,z) \rho_2 \widehat \lambda(z) \overline \lambda(z) \overline{\lambda}'(z)g(z)(\frac{D_2}{2} + D_1)D_1^2 \vz - \rho_2(T-t)\theta_2(y,z)g(z) \overline{\lambda}(z)\overline{\lambda}'(z) D_1^2\vz. 
\end{align}
with $\theta$, $\theta_1$ and $\theta_2$ solving the Poisson equations
\begin{equation}
\MCL_y \theta(y,z) = \lambda^2(y,z) - \overline \lambda^2(z), \quad \MCL_y \theta_1(y,z) = \lambda(y,z)a(y)\theta_y(y,z) - B(z), \quad \MCL_y \theta_2(y,z) = \lambda(y,z) -\widehat \lambda(z),
\end{equation}
and we use the notation
\begin{equation}
D_k = \partial_x^k R^k(t,x; \overline \lambda(z)), \quad R(t,x; \overline \lambda(z)) = -\frac{\vz_x(t,x,z)}{\vz_{xx}(t,x,z)}.
\end{equation}
We also recall the Vega-Gamma relations:
\begin{equation}
\vz_z = (T-t) \overline{\lambda} \overline{\lambda}' D_1 \vz, \quad R_z = (T-t) \overline \lambda\overline \lambda' R^2 R_{xx}.
\end{equation}
\end{comment}

\subsection{Model assumptions and preliminary estimates}\label{sec:assump}

We first make precise the regularity assumptions on the utility function $U$, on the risk tolerance $-U'/U''$, and on the inverse marginal utility $(U')^{(-1)}$. They will be satisfied by mixtures of power utilities or sums of inverse marginal power utilities for instance, and we refer to Appendix A in 
\cite{fouque2017asymptotic} for further details. The advantage of these mixtures is that the Arrow-Pratt risk aversion ($-xU''/U'$) is wealth dependent as opposed to constant for pure power utilities.

\begin{assump}\label{assump_U}
	We make the following assumptions on the utility function $U(x)$:
	\begin{enumerate}
		\item\label{p1_assump_Uregularity}  U(x) is $C^9(0,\infty)$, strictly increasing, and strictly concave and satisfies the following conditions (Inada and asymptotic elasticity):
		\begin{equation}\label{p1_eq_usualconditions}
		U'(0+) = \infty, \quad U'(\infty) = 0, \quad \text{AE}[U] := \lim_{x\rightarrow \infty} x\frac{U'(x)}{U(x)} <1.
		\end{equation}
		%	\item\label{p1_assump_Ubddbelow}U(0+) is finite. Without loss of generality, we assume U(0+) = 0.
		\item\label{p1_assump_Urisktolerance} Assume the risk tolerance $R(x) := -U'(x) / U''(x)$ satisfies $R(0) = 0$, strictly increasing, $R'(x) < \infty$ on $[0,\infty)$, and there exists $K\in\RR^+$, such that for $x \geq 0$, and $ 2\leq i \leq 7$,
		\begin{equation}\label{p1_assump_Uiii}
		\abs{\partial_x^{(i)}R^i(x)} \leq K.
		\end{equation}
		\item\label{p1_assump_Ugrowth} Define the inverse function of the marginal utility $U'(x)$ as $I: \RR^+ \to \RR^+$, $I(y) = U'^{(-1)}(y)$, and assume that, for some positive $\alpha$, $I(y)$ satisfies the polynomial growth condition:
		\begin{equation}\label{p1_cond_I}
		I(y) \leq \alpha + \kappa y^{-\alpha}.
		\end{equation}
	\end{enumerate}
\end{assump}

Now, we make precise the definition of a Markovian admissible strategy.

\begin{defi}[Admissibility]\label{def:admissible}
	A Markovian strategy $\pi$ is admissible if $X^\pi_t$ stays positive a.s. for all $t\in [0,T]$ and
	\begin{align}
	&\EE\int_0^T \left( \pi(t, X_t^\pi, Y_t, Z_t)\sigma(Y_t, Z_t) \vz_x(t, X_t^\pi, Z_t)\right) ^2 \ud t < \infty,
	\label{admissibility1}\\
	&\EE\int_0^T \left( D_1 \vz(t, X_t^\pi, Z_t)\right) ^2 \ud t  < \infty.\label{admissibility2}
	\end{align}
\end{defi}

Next, we make the following technical assumptions on the model parameters and the various quantities appearing in our expansion. In particular, it involves several Poisson equations for which we assume that the solutions are bounded.

\begin{assump}\label{assump_paper} 
	
	\begin{enumerate}
		\item\label{p2_assump_valuefuncSZadd} For any starting points $(s, y, z)$ and fixed $(\eps, \delta)$, the system of SDEs \eqref{eq_St}--\eqref{eq_Yt}--\eqref{eq_Zt} has a unique strong solution $(S_t, Y_t, Z_t)$. The  process $Y$ is ergodic and has a unique invariant distribution $\Phi$ (independent of $\eps$).

		\item The following functions are bounded with bounded derivatives:
		\begin{equation}
		\lambda(y,z), g(z), c(z), a(y), B(z), \widehat{\lambda}(z),  \overline \lambda(z), %\overline{\lambda}'(z),  
		B_1(z), \theta(y,z), \theta_i(y,z), 1 \leq i \leq 11, 
		\end{equation}
		where $\theta_i$ are solutions of the Poisson equations:
		\begin{align}
		&\MCL_y \theta(y,z) = \lambda^2(y,z) - \overline \lambda^2(z), \; \overline \lambda^2(z) = \average{\lambda^2(\cdot, z)}, \\
		&\MCL_y \theta_1(y,z) = \lambda(y,z)a(y)\partial_y \theta(y,z) - B(z), \;  B(z) = \average{\lambda(\cdot, z)a(\cdot)\partial_y \theta(\cdot, z)},\\
		& \MCL_y \theta_2(y,z) = \lambda(y,z) -\widehat \lambda(z), \; \widehat \lambda(z) = \average{\lambda(\cdot, z)}, \\	
		&\MCL_y \theta_3(y,z) = a(y)\lambda(y,z)\partial_y \theta_1(y,z) - B_1(z), \; B_1(z) = \average{a(\cdot)\lambda(\cdot, z) \partial_y \theta_1(\cdot, z)},\label{def:B1}\\
		&\MCL_y \theta_4(y,z) = \theta(y,z) \lambda^2(y,z) - \average{\theta\lambda^2}, \label{def:theta4}\\
		& \MCL_y \theta_5(y,z) = \theta(y,z) - \average{\theta}, \label{def:theta5}\\
		&\MCL_y \theta_6(y,z) = \partial_{yz} \theta(y,z) - \average{\partial_{yz} \theta(\cdot, z)}, \label{def:theta6}\\
		&\MCL_y \theta_7(y,z) =\partial_y \theta (y,z) - \average{\partial_y \theta(\cdot, z)}, \label{def:theta7} \\
		&\MCL_y \theta_8(y,z) = a(y)\lambda(y,z)\partial_y \theta_2(y,z) - \average{a(\cdot)\lambda(\cdot, z)\partial_y \theta_2(\cdot, z)}, \label{def:theta8} \\
		&\MCL_y \theta_9(y, z) = a^2(y) \left(\partial_y \theta\right)^2(y,z) - \average{a^2(\cdot)\left(\partial_y \theta\right)^2(\cdot, z)}, \label{def:theta9} \\
		&\MCL_y \theta_{10}(y,z) = a(y) \partial_{yz} \theta(y,z) - \average{a(\cdot)   \partial_{yz} \theta(\cdot, z)}, \label{def:theta10} \\
		&\MCL_y \theta_{11}(y,z) = a(y) \partial_y \theta(y,z) - \average{a(\cdot) \partial_{yz} \theta(\cdot, z)}. \label{def:theta11} 
		\end{align}
	Moreover, $\aves(z)$ is bounded away from 0.
		\end{enumerate}
	\end{assump}

With all the notations and assumption introduced, we obtain the following propositions by lengthy but straightforward calculations.  We omit the proofs here and refer to \cite{fouque2017asymptotic,FoHu:20}.
\begin{prop}\label{prop_estimate}
	Under the above assumptions, the functions $v^{(i,j)}$ and $w^{(i,j)}$, $i+j >0$ (cf.  \eqref{def:voz}, \eqref{def:vzo}, \eqref{def:wtz}, \eqref{def:wthz}  and \eqref{def:wto}) satisfy 
	\begin{align}
	&v^{(i,j)} \leq h(y,z) D_1\vz, \text{ with a bounded function } h(y,z),\\
	&v^{(i,j)}_x \leq h(y,z) \vz_x, \text{ with a bounded function } h(y,z), \\
	&v^{(i,j)}_{xx} \leq h(y,z) \vz_{xx}, \text{ with a bounded function } h(y,z),
	\end{align}
	where $h(y,z)$ denotes a bounded function in $y$ and $z$, and may vary from case to case. Similar inequalities hold for $w^{(i,j)}$. In particular, one has 
	\begin{equation}
	D_1^iD_2^j D_1^k \vz \leq h(z) D_1 \vz,  \forall i, j, k \in \NN^+
	%\text{ with a bounded function } h(z).
	\end{equation}
	with a bounded function $h(z)$.
\end{prop}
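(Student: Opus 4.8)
The plan is to reduce every function in the list to the schematic form $h(t,y,z)\,W[\vz]$, where $h$ is bounded uniformly on $[0,T]\times\RR^2$ and $W$ is a word in the operators $D_1$ and $D_2$, and then to show that $W[\vz]$ itself equals a uniformly bounded function times $D_1\vz$; the estimates on $v^{(i,j)}_x$ and $v^{(i,j)}_{xx}$ then follow by differentiating these identities once or twice more in $x$. Three ingredients enter: (a) the propagation of the risk-tolerance regularity of Assumption~\ref{assump_U} from $U$ to the Merton function $M(t,x;\lambda)$, giving that the Merton risk tolerance $R=R(t,x;\overline\lambda(z))=-\vz_x/\vz_{xx}$ and the ``balanced'' quantities $\partial_x^i(R^i)$, $1\le i\le 7$, are bounded uniformly in $(t,x)$ and locally uniformly in $\lambda=\overline\lambda(z)$ --- this is the content of \cite{fouque2017asymptotic, FoHu:20}, and is applicable because $\overline\lambda$ is bounded and bounded away from $0$; (b) an elementary operator algebra for $D_1,D_2$ acting on $\vz$; and (c) the Vega--Gamma relations \eqref{eq:vegagamma} to remove $z$-derivatives.

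For (b), the only inputs are $R\,\vz_{xx}=-\vz_x$ (the definition of $R$) and $D_2=D_1^2-R_xD_1$ (as differential operators). These give $\partial_x(D_1\vz)=(R_x-1)\vz_x$, hence $D_1^2\vz=(R_x-1)D_1\vz$ and $D_2\vz=-D_1\vz$, and, for an $x$-independent or polynomial factor $f$, $D_1\big(f\,D_1\vz\big)=\big(D_1 f+(R_x-1)f\big)D_1\vz$. Iterating, one shows by induction that for any word $W$ in $D_1,D_2$ one has $W[\vz]=P\cdot D_1\vz$ with $P$ a polynomial in $R_x$ and in the balanced quantities $R^{i-1}\partial_x^iR$ for $i$ up to one more than the length of $W$; these are uniformly bounded because $\partial_x^i(R^i)=iR^{i-1}\partial_x^iR+(\text{lower, already-balanced terms})$, so the boundedness of $\partial_x^i(R^i)$ passes to $R^{i-1}\partial_x^iR$ by induction on $i$. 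Since $D_1=R\partial_x$, no bare derivative $\partial_x^iR$ with $i\ge2$ ever appears without a compensating power of $R$, which is exactly why the hypotheses $|\partial_x^i(R^i)|\le K$ are the natural ones; the range $2\le i\le7$ accommodates the at-most-fourth-order $x$-derivatives appearing in the $v^{(i,j)}$ together with the two extra derivatives needed for $v^{(i,j)}_{xx}$.

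For (c) and the conclusion: reading off the explicit formulas in the Proposition, every $v^{(i,j)},w^{(i,j)}$ with $i+j>0$ is a finite sum of terms $p(T-t)\,\beta(y,z)\,W[\vz]$ (and, in the case of $\vzo$, one term containing $\vz_z$), where $p$ is a polynomial, $\beta$ is a product of the bounded functions $\theta,\theta_i,B,\widehat\lambda,\overline\lambda,g$ and their bounded derivatives from Assumption~\ref{assump_paper}, and $W$ is a word in $D_1,D_2$. Using $\vz_z=(T-t)\overline\lambda\overline\lambda'D_1\vz$ and $R_z=(T-t)\overline\lambda\overline\lambda'R^2R_{xx}$ from \eqref{eq:vegagamma}, together with the higher analogues obtained by differentiating in $z$, every $z$-derivative of $\vz$ or of $R$ is itself turned into a (bounded)$\times D_1\vz$ or a balanced-quantity contribution, so step (b) applies verbatim. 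Hence $v^{(i,j)}=h(t,y,z)D_1\vz$ with $h$ uniformly bounded, and applying $\partial_x$ or $\partial_x^2$ to this identity, using $R\,\vz_{xx}=-\vz_x$ and the boundedness from (b), yields $v^{(i,j)}_x=h\,\vz_x$ and $v^{(i,j)}_{xx}=h\,\vz_{xx}$; the same holds for $w^{(i,j)}$, and the displayed bound $D_1^iD_2^jD_1^k\vz\le h(z)D_1\vz$ is the special case $\beta\equiv1$, $p\equiv1$. Since $D_1\vz=R\vz_x>0$, all the stated inequalities follow.

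The main obstacle is ingredient (a): showing that the uniform bounds on the utility's risk tolerance and on $\partial_x^i(R^i)$ transfer to the time- and $z$-dependent Merton risk tolerance $R(t,x;\overline\lambda(z))$, uniformly in $(t,x)$. This rests on $R(t,x;\lambda)$ solving a fast-diffusion-type equation to which a comparison/maximum-principle argument applies, propagating the terminal bounds, together with smooth dependence on the (bounded, bounded-away-from-zero) parameter $\lambda=\overline\lambda(z)$; this analysis is carried out in \cite{fouque2017asymptotic, FoHu:20} and is invoked rather than reproved here. Granting it, the remaining work is the lengthy but routine bookkeeping of the induction in (b)--(c), i.e.\ checking that the finitely many $x$- and $z$-derivative combinations that actually occur --- their order controlled by $U\in C^9$ and by the correctors $\theta_i$, $i\le 11$ --- are all covered by the stated bounds on $\partial_x^i(R^i)$ and by the boundedness of the $\theta_i$'s.
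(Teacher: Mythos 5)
Your proposal is correct and matches the approach the paper intends: the paper omits this proof, deferring to \cite{fouque2017asymptotic,FoHu:20}, and the intended calculation is exactly your reduction --- express each $v^{(i,j)}$, $w^{(i,j)}$ as bounded $(y,z)$-coefficients (Assumption~\ref{assump_paper}) times words in $D_1,D_2$ acting on $\vz$, use the algebra $D_2=D_1^2-R_xD_1$, $D_2\vz=-D_1\vz$, $\partial_x(D_1\vz)=(R_x-1)\vz_x$ together with the balanced risk-tolerance bounds of Proposition~\ref{prop_risktolerance}, and remove $z$-derivatives via the Vega--Gamma relations \eqref{eq:vegagamma}. Your deferral of the propagation of Assumption~\ref{assump_U} to $R(t,x;\aves(z))$ is the same deferral the paper makes, so nothing essential is missing.
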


\begin{prop}[Proposition 2.6 in \cite{FoHu:20}]\label{prop_risktolerance}
	Under Assumption~\ref{assump_U} of the general utility, the risk-tolerance $R(t,x;\aves(z))$ function satisfies the following:  $\exists K_j >0$ for $0 \leq j  \leq  6$, such that $\forall (t,x,\aves(z)) \in [0,T) \times \RR^+ \times \RR$,
	\begin{equation}
	\abs{R^j(t,x;\aves(z))(\partial_x^{(j+1)}R(t,x;\aves(z)))} \leq K_j.
	\end{equation}
	Or equivalently, $\forall 1 \leq j \leq 7$, there exists $\widetilde K_j >0$, such that $\forall (t,x,z) \in [0,T) \times \RR^+ \times \RR$,
	\begin{equation}
	\abs{\partial_x^{(j)} R^j(t,x;\aves(z))} \leq \widetilde K_j.
	\end{equation}
	Moreover, $0 \leq R(t, x; \overline \lambda(z)) \leq K_0 x$, and
	 the following quantities are uniformly bounded: $RR_{xxz}$, $R^2R_{xxxz}$, $R_{xzz}$, $RR_{xxzz}$ and $R^2R_{xxxzz}$.
\end{prop}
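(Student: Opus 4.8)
\emph{Idea of proof.}
Since $R(t,x;\overline\lambda(z))$ depends on $z$ only through the scalar $\lambda=\overline\lambda(z)$, I would first establish all estimates treating $\lambda$ as a fixed parameter and then pass to $z$ using that $\overline\lambda$ has bounded derivatives (Assumption~\ref{assump_paper}). The key idea is to linearize the problem in the dual variable. Write $R_U(x):=-U'(x)/U''(x)$ (the static risk tolerance, $R(x)$ of Assumption~\ref{assump_U}), $I:=(U')^{(-1)}$, and $\tau:=T-t$. Differentiating the Merton PDE \eqref{eq:classicalMerton} twice in $x$ shows that $R(t,x;\lambda)$ solves the fast-diffusion equation $R_t+\tfrac12\lambda^2R^2R_{xx}=0$ with $R(T,\cdot;\lambda)=R_U$; equivalently, the convex conjugate $N(t,y;\lambda):=\sup_{x>0}(M(t,x;\lambda)-xy)$ solves the \emph{linear} PDE $N_t+\tfrac12\lambda^2y^2N_{yy}=0$ with terminal datum $\widetilde U(y):=\sup_{x>0}(U(x)-xy)$, hence by Feynman--Kac $N(t,y;\lambda)=\EE[\widetilde U(y\,e^{\lambda B_\tau-\lambda^2\tau/2})]$ for a standard Brownian motion $B$. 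Combining this with the duality relations $M_{xx}=-1/N_{yy}$, $x=-N_y$ at $y=M_x$, the identity $\widetilde U''(y)=R_U(I(y))/y$, and a Cameron--Martin shift, one arrives at
\begin{equation}\label{eq:Rdual}
R(t,x;\lambda)=\rho(t,\xi;\lambda):=\EE\!\left[r\!\left(\xi+\tfrac12\lambda^2\tau+\lambda B_\tau\right)\right],\qquad \xi:=\log M_x(t,x;\lambda),\quad r(\eta):=R_U(I(e^\eta)),
\end{equation}
together with $x=\EE[I(e^{\xi+\lambda^2\tau/2+\lambda B_\tau})]=:\phi(t,\xi;\lambda)$, where $\phi_\xi=-\rho<0$, so $\xi\mapsto x$ is a decreasing diffeomorphism and $\partial_x=-\rho^{-1}\partial_\xi$ along it.

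I expect the genuinely delicate part to be making this reduction rigorous: one needs $C^9$-regularity of $M$ in $x$ (note $\partial_x^7R$ involves $\partial_x^9M$, which is why $U\in C^9$ is assumed), the duality relations and the Feynman--Kac formula, and the integrability justifying differentiation under the expectation in \eqref{eq:Rdual}. All of this rests on the Inada and asymptotic-elasticity conditions together with the polynomial bound $I(y)\le\alpha+\kappa y^{-\alpha}$ of Assumption~\ref{assump_U} (which, with $R_U(x)\le(\sup R_U')\,x$, also controls the growth of $r$ as $\eta\to-\infty$); the rest is bookkeeping.

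Granting \eqref{eq:Rdual}, the core of the argument is a transfer principle. Along $x=I(e^\eta)$ one has $\tfrac{d}{d\eta}=-R_U\,\partial_x$, so $r^{(k)}(\eta)/r(\eta)=(-R_U\partial_x)^kR_U/R_U$ is a polynomial with bounded integer coefficients in the quantities $R_U^{\,i-1}\partial_x^iR_U$, $1\le i\le k$. These are bounded for $1\le i\le 7$ by Assumption~\ref{assump_U} ($R_U'<\infty$ and $|\partial_x^iR_U^i|\le K$ for $2\le i\le7$, equivalent by an elementary combinatorial identity), and $r>0$ since $R_U>0$ on $(0,\infty)$; hence the \emph{pointwise} bound $|r^{(k)}(\eta)|\le C_k\,r(\eta)$, $1\le k\le7$. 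This inequality survives the Gaussian average in \eqref{eq:Rdual} verbatim, giving $|\partial_\xi^{(k)}\rho|\le C_k\,\rho$ for $1\le k\le7$. Since $\partial_x=-\rho^{-1}\partial_\xi$, a short induction shows $R^j\partial_x^{j+1}R=\rho^{\,j}\partial_x^{j+1}\rho$ is a polynomial in the ratios $\partial_\xi^{(k)}\rho/\rho$, $1\le k\le j+1$, with absolute coefficients, yielding $|R^j\partial_x^{j+1}R|\le K_j$ for $0\le j\le6$; the equivalent form $|\partial_x^jR^j|\le\widetilde K_j$, $1\le j\le7$, follows from the same identity. Finally, $R\ge0$ (as $r\ge0$), $R(t,0^+;\lambda)=\lim_{\xi\to\infty}\rho=0$ (dominated convergence, using $r(\eta)\le(\sup R_U')\,I(e^\eta)$ and the growth of $I$) and $|R_x|\le K_0$ together give $0\le R(t,x;\lambda)\le K_0x$.

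It remains to control the mixed derivatives. I would not differentiate $\rho$ in $\lambda$ (the integrability is worse there), but instead use the Vega--Gamma identity $R_z=(T-t)\overline\lambda(z)\overline\lambda'(z)\,R^2R_{xx}$ from \eqref{eq:vegagamma}, which replaces each $z$-derivative by $x$-derivatives at the cost of bounded prefactors (assembled from $\overline\lambda,\overline\lambda',\overline\lambda''$ and $T-t$, all bounded under Assumption~\ref{assump_paper}) and two extra powers of $R$. Applying $\partial_x$ and $\partial_z$ repeatedly and multiplying by the power of $R$ prescribed in the statement, a Leibniz-rule bookkeeping shows that $RR_{xxz}$, $R^2R_{xxxz}$, $R_{xzz}$, $RR_{xxzz}$ and $R^2R_{xxxzz}$ each become polynomials in $\{R^j\partial_x^{j+1}R\}_{0\le j\le6}$ with bounded coefficients; tracking the top-order term one finds that it is precisely $R^6\partial_x^7R$ (occurring in $R^2R_{xxxzz}$), which is exactly why the range $0\le j\le6$ above is what is needed. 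None of the constants produced depends on the evaluation point, so all bounds are uniform in $(t,x,z)$, which completes the proof.
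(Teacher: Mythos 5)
This proposition is quoted by the paper as Proposition~2.6 of \cite{FoHu:20} and no proof is reproduced here, so the comparison is with the argument in that reference rather than with anything in the present text. Your route is essentially the same linearization device used there and in \cite{KaZa:14}: the convex dual of the Merton value function \eqref{eq:classicalMerton} solves a linear (heat-type) equation, which is exactly the $H$-transform \eqref{def:Hfunction}--\eqref{eq:Heat} that this paper itself invokes in the admissibility proof of Lemma~\ref{lemma:pzadmissible}; your Gaussian representation $R=\EE[r(\xi+\tfrac12\lambda^2\tau+\lambda B_\tau)]$ with $r(\eta)=R_U(I(e^{\eta}))$ is the same object as $H_x$ composed with the spatial inverse, reached via Legendre duality and Cameron--Martin instead of directly through $H$. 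The core mechanism also matches: transfer of the terminal-utility bounds $|R_U^{i-1}\partial_x^iR_U|\le C$ (equivalently \eqref{p1_assump_Uiii} plus bounded $R'$) into $|\partial_\xi^k\rho|\le C_k\rho$, then the chain-rule induction giving $|R^j\partial_x^{j+1}R|\le K_j$, $0\le j\le 6$, and $0\le R\le K_0x$; and the mixed $z$-derivatives via the Vega--Gamma identity \eqref{eq:vegagamma}, which is legitimate and not circular since that identity comes from differentiating the Merton PDE, not from this proposition. I checked your ``balance'' claim: after substituting $R_z=(T-t)\overline\lambda\,\overline\lambda'R^2R_{xx}$, each of $RR_{xxz}$, $R^2R_{xxxz}$, $R_{xzz}$, $RR_{xxzz}$, $R^2R_{xxxzz}$ expands into monomials that are exact products of the bounded blocks $R^j\partial_x^{j+1}R$ and $R_x$ with no stray factor of $R$ left over, and the top-order term is indeed $R^6\partial_x^7R$, so the range $j\le 6$ (hence $U\in C^9$) is exactly what is consumed. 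The remaining work is the analytic justification you flag yourself --- smoothness and strict concavity of $M$, validity of Feynman--Kac/duality, and differentiation under the expectation (dominated via $r(\eta)\le(\sup R_U')\,I(e^\eta)$ and \eqref{p1_cond_I}) --- plus two readings of the hypotheses that should be made explicit: Assumption~\ref{assump_U}'s ``$R'(x)<\infty$'' must be taken as $\sup R'<\infty$ (this is how $K_0$ arises), and the $z$-bookkeeping uses boundedness of $\overline\lambda''$, i.e.\ the ``bounded derivatives'' clause of Assumption~\ref{assump_paper} up to second order. With those caveats the sketch is a faithful reconstruction of the cited proof.
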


\section{Proof of the Main Theorem}\label{sec:derivation}
Recall the sub- and super-solution we shall construct are of the form
\begin{align}\label{eq:Vpm}
V^\pm &= \vz + \sqrt\eps \voz + \sqrt{\delta}\vzo + \eps \wtz + \eps^{3/2} \wthz + \eps\sqrt{\delta}\wto \\
& \quad \pm (2T-t)(\eps N_A + \delta N_B + \sqrt{\eps\delta}N_C) \pm \eps^2 F \pm \eps^{3/2}\sqrt{\delta}H \pm \eps\delta G.
\end{align}
This section is dedicated to identify the terms $(N_A, N_B, N_C)$ which will be functions of $(t, x, z)$ and $(F, G, H)$  which will be functions of $(t, x, y, z)$ in $V^\pm$.

\subsection{Sub-solution}\label{sec:sub}

We shall first work with the process $V^-(t, X_t^\pz, Y_t, Z_t)$ along the given zeroth order strategy $\pz$:
\begin{equation}\label{def:pz}
\pz(t, x, y, z) := \frac{\lambda(y, z)}{\sigma(y, z)} R(t, x; \overline{\lambda}(z)) =  -\frac{\lambda(y, z)}{\sigma(y, z)} \frac{\vz_x(t, x, z)}{\vz_{xx}(t, x, z)}.
\end{equation}
\begin{lem}\label{lemma:pzadmissible}
The strategy $\pz$ is admissible (in the sense of Definition \ref{def:admissible}).
\end{lem}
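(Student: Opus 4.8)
The plan is to verify the two integrability conditions \eqref{admissibility1} and \eqref{admissibility2} in Definition~\ref{def:admissible} for $\pi = \pz$, together with positivity of the wealth process $X^{\pz}$. First I would substitute the explicit form of $\pz$ from \eqref{def:pz} into the wealth dynamics \eqref{eq:wealth}: since $\pz \sigma(Y_t,Z_t) = \lambda(Y_t,Z_t) R(t,X_t^{\pz};\aves(Z_t))$ and $\pz \mu(Y_t,Z_t) = \lambda^2(Y_t,Z_t) R(t,X_t^{\pz};\aves(Z_t))$, the process $X^{\pz}$ solves a one-dimensional SDE whose drift and diffusion coefficients are $\lambda^2 R$ and $\lambda R$ evaluated along the path. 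Using Proposition~\ref{prop_risktolerance}, namely $0 \le R(t,x;\aves(z)) \le K_0 x$ and $R(0)=0$, together with the boundedness of $\lambda$ from Assumption~\ref{assump_paper}, these coefficients are of at-most-linear growth in $x$ and vanish at $x=0$; this gives existence of a unique strong solution that stays strictly positive a.s.\ (a comparison with a geometric-Brownian-motion-type process, or equivalently writing $\log X^{\pz}$, handles the positivity), so $X^{\pz}_t > 0$ for all $t \in [0,T]$.

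Next I would establish the moment bounds needed for the two expectations. The linear growth of the coefficients yields $\EE[\sup_{t\le T}(X^{\pz}_t)^p] < \infty$ for every $p \ge 1$ by standard SDE moment estimates (Gronwall applied to $\EE[\sup_{s\le t}(X^{\pz}_s)^p]$). For \eqref{admissibility1}, the integrand is $(\pz\sigma(Y_t,Z_t)\vz_x(t,X_t^{\pz},Z_t))^2 = \lambda^2(Y_t,Z_t)R^2(t,X_t^{\pz};\aves(Z_t))(\vz_x(t,X_t^{\pz},Z_t))^2$; bounding $\lambda^2$ by a constant and $R \le K_0 X_t^{\pz}$, this is controlled by $C (X_t^{\pz})^2 (\vz_x)^2$. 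The key structural fact, which follows from $\vz(t,x,z)=M(t,x;\aves(z))$ and the asymptotic elasticity / Inada conditions in Assumption~\ref{assump_U} (see the estimates invoked for Proposition~\ref{prop_estimate} and in \cite{fouque2017asymptotic,FoHu:20}), is that $x\,\vz_x(t,x,z)$ and $D_1\vz = R\,\vz_x$ have polynomial growth in $x$ uniformly in $(t,z)$ on compacts — indeed $D_1\vz$ inherits the polynomial growth of the Merton value function's first derivative via $R \le K_0 x$. Combining the polynomial-growth bound with the finite moments of $\sup_{t\le T} X^{\pz}_t$ gives \eqref{admissibility1}. For \eqref{admissibility2}, the integrand is $(D_1\vz(t,X_t^{\pz},Z_t))^2$, and the same polynomial-growth-plus-finite-moments argument applies directly.

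The main obstacle I anticipate is not the SDE theory but pinning down the growth of $\vz_x$ and $D_1\vz$ in $x$ for a \emph{general} utility satisfying only Assumption~\ref{assump_U}: one must use the reasonable asymptotic elasticity condition $\mathrm{AE}[U]<1$ to get an upper power bound, and the Inada condition $U'(0+)=\infty$ together with $R(0)=0$ to control behavior near $x=0$, via the representation of $M$ and $R$ through the inverse marginal utility $I$ and its polynomial growth \eqref{p1_cond_I}. This is precisely the type of estimate recorded in Proposition~\ref{prop_estimate} and in the references, so I would cite those and reduce the lemma to: (i) linear growth of $\pz\sigma$ and $\pz\mu$ in $x$ plus vanishing at $0$, hence a positive unique strong solution with all moments; and (ii) polynomial growth of $\vz_x$, $D_1\vz$ in $x$, hence both integrals finite by Hölder. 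A minor point to be careful about is that $R(t,x;\aves(z))$ is the risk tolerance of the Merton value function (time-dependent) rather than of $U$ itself, but Proposition~\ref{prop_risktolerance} already supplies $0\le R(t,x;\aves(z))\le K_0 x$ uniformly, which is all that is needed.
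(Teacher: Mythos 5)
Your first half matches the paper's proof: substituting $\pz$ gives $\ud X_t^\pz = R(t,X_t^\pz;\aves(Z_t))[\lambda^2\ud t+\lambda\ud W_t]$, and $0\le R\le K_0x$ with bounded $\lambda$ yields positivity and moments of $X^\pz$ of every order; boundedness of $\lambda$ also reduces \eqref{admissibility1} to \eqref{admissibility2}, exactly as in the paper. The problem is the step you yourself flag as the main obstacle: you assert that $\vz_x$ and $D_1\vz=R\,\vz_x$ have ``polynomial growth in $x$ uniformly in $(t,z)$ on compacts'' and propose to cite Proposition~\ref{prop_estimate} for it. That proposition does not contain such an estimate: it bounds the correction terms $v^{(i,j)}$, $w^{(i,j)}$ and iterated operators \emph{relative to} $D_1\vz$, and Proposition~\ref{prop_risktolerance} only controls $R$ and its derivatives; neither gives a quantitative bound on $\vz_x$ itself for a general utility. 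Moreover ``on compacts'' cannot suffice, since $X^\pz$ ranges over all of $(0,\infty)$ and the delicate behavior is the Inada blow-up of $\vz_x$ as $x\to 0^+$ (together with the growth of $x\,\vz_x$ as $x\to\infty$); a global, two-sided power-type bound is what is actually needed, and it does not follow from $R\le K_0x$ alone.

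The paper closes precisely this gap with the $H$-transform of \cite{KaZa:14}: defining $H$ by $\vz_x(t,H(x,t,\aves(z)),z)=e^{-x-\frac12\aves^2(z)(T-t)}$, the function $H$ solves the heat equation \eqref{eq:Heat} with terminal data $I(e^{-x})$, and the polynomial growth \eqref{p1_cond_I} of the inverse marginal utility gives $H(x,t,\aves(z))\le Ce^{Cx}$, hence $H^{(-1)}(x,t,\aves(z))\ge \frac1C\log(x/C)$ and therefore $\vz_x(t,x,z)\le C' x^{-1/C}$ globally in $x$, uniformly in $(t,z)$. This converts the integrand in \eqref{admissibility2} into a fixed power of $X_t^\pz$, which the moment bounds handle. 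So your overall skeleton (moments of $X^\pz$ plus a growth bound on $\vz_x$ derived from $I$, AE and Inada) is the right one, but the decisive estimate is left unproved and is attributed to a proposition that does not supply it; to complete the argument you would need to carry out the heat-kernel/$H$-transform computation (or an equivalent representation of $\vz_x$ through $I$) rather than cite Proposition~\ref{prop_estimate}.
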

\begin{proof}
From \eqref{eq:wealth} and \eqref{def:pz}, we have
\begin{equation}
\ud X_t^\pz = R(t,X_t^\pz;\overline\lambda(Z_t))\left[\lambda^2(Y_t,Z_t)\ud t+\lambda(Y_t,Z_t) \ud W_t\right].
\end{equation}
Using that $0\leq R(t,x;\overline\lambda(z)) \leq K_0 x$,  and the boundedness of $\lambda$, one deduces that $ X_t^\pz$ is a proper exponential and stays positive. Moreover, it has $p^{th}$-moments for any $p \in \NN$, uniformly in $t \in [0,T]$. Next we observe that the condition \eqref{admissibility1} applied to $\pz$ reduces to the condition \eqref{admissibility2} after using the boundedness of $\lambda$. Thus it suffices to verify \eqref{admissibility2} when $\pi = \pz$.

To this end, we recall the $H$-transform used in \cite{KaZa:14}. Let $H: \RR \times [0,T] \times \RR \to \RR^+$ be defined by
\begin{equation}\label{def:Hfunction}
\vz_x(t, H(x, t, \overline \lambda(z)), z) = e^{-x - \half \aves^2(z)(T-t)}.
\end{equation}
It satisfies the heat equation 
\begin{equation}\label{eq:Heat}
H_t + \half \aves^2(z) H_{xx} = 0,
\end{equation}
with the terminal condition $H(t, x, \aves(z) = I(e^{-x}))$. Now define the spatial inverse function $H^{(-1)}(y, t, \aves(z)): \RR^+ \times [0,T] \times \RR \to \RR$, i.e., $H(H^{(-1)}(y, t, \aves(z), t, \aves(z)) = y$. Using \eqref{def:Hfunction} and   $0\leq R(t,x;\overline\lambda(z)) \leq K_0 x$, it remains to show
\begin{equation}\label{eq:vzadmissible}
\EE\int_0^T (X_t^\pz)^2 e^{-2H^{(-1)}(X_t^\pz, t, \aves(Z_t)) - \aves^2(Z_t)(T-t)} \ud t < \infty. 
\end{equation}
To further proceed, we need a lower bound for the inverse function $H^{(-1)}$, or equivalently an upper bound for $H$ since $H$ is positive and strictly increasing. Using the fact that $H$ solves \eqref{eq:Heat} with a terminal condition $I(e^{-x}) \leq \alpha + \kappa e^{\alpha x}$, $\aves(z)$ is bounded above and below away from zero, and $t \in [0,T]$, one can deduce that, (by writing down the solution of $H$ as the convolution of $I(e^{-x})$ with the heat kernel)
\begin{equation}
H(x, t, \aves(z)) \leq C e^{Cx}, \text{ thus } H^{(-1)} (x, t, \aves(z))  \geq \frac{1}{C} \log(x/C), \text{ for a generic contant } C.
\end{equation}
Consequently, equation \eqref{eq:vzadmissible} is bounded by
\begin{equation} 
\EE\int_0^T (X_t^\pz)^p  \ud t, \quad p = 2 - 2/C.
\end{equation}
which is finite as we have shown that $X_t^\pz$ has bounded moments of any other. Therefore, $\pz$ is admissible.
\end{proof}

For the derivations presented below, the submartingality  requirement \ref{r2} will enable us to pin down certain formulas for $(N_A, N_B, N_C)$ and $(F, G, H)$ up to some constants $(C_A, C_B, C_C)$. Then with sufficiently large choices of $(C_A, C_B, C_C)$ and sufficiently small $(\eps, \delta)$, \ref{r1} is fulfilled.

\subsubsection{The submartingality requirement \ref{r2}}\label{sec:subr2}
To fulfill it, we first write down the operator $Q^\pi$ with $\pi = \pz$ given in \eqref{def:pz}:
\begin{equation}
Q^\pz = \Ltx(\lambda(y,z)) + \frac{1}{\eps} \MCL_y + \sqrt{\frac{\delta}{\eps}}\MCM_{yz} + \delta \MCL_z + \frac{1}{\sqrt{\eps}} \rho_1 a(y)\lambda(y,z) D_1 \partial_y + \sqrt{\delta}\rho_2 g(z)\lambda(y,z) D_1\partial_z.
\end{equation}
In the sequel, to avoid cumbersome notations, we will systematically omit the variables of all functions. Consider  $(N_A, N_B, N_C)$ as functions of $(t, x, z)$ and using the formula of $\vz, \voz, \vzo, \wtz, \wthz, \wto$, one has:
\begin{align}
&Q^\pz[V^-] \\
&= \eps\left(\Ltx(\lambda)\wtz - \MCL_y F + \rho_1 a\lambda D_1 \wthz_y - \Ltx(\lambda)[(2T-t)N_A]\right) \\
& \quad + \sqrt{\eps\delta} \left(-\Ltx(\lambda)[(2T-t)N_C] - \MCL_y H + \MCM_{yz} \wtz + \rho_1 a\lambda D_1 \wto_y + \rho_2 g\lambda D_1 \voz_z\right)\\
& \quad + \delta \left(-\Ltx(\lambda)[(2T-t)N_B] + \MCL_z \vz + \rho_2 g\lambda D_1\vzo_z - \MCL_y G\right)  + higher \; order \;  terms \\
& := \eps \text{I}_\eps + \sqrt{\eps\delta}\text{I}_{\eps\delta} + \delta \text{I}_\delta + h.o.t. \label{eq:QpzVminus}
\end{align}
Then, by It\^{o} formula,
\begin{align}
\ud V^-(t, X_t^\pz, Y_t, Z_t) = Q^\pz[V^-] \ud t +    V^-_x \pz\sigma \ud W_t + V^-_y  \frac{1}{\sqrt{\eps}}a \ud W_t^Y  +  V^-_z\sqrt{\delta} g\ud W_t^Z. 
\end{align}
Thus it suffices to show that
\begin{enumerate}[ label=(\textbf{R2-\arabic*})]
	\item\label{r2-1} $Q^\pz[V^-] \geq 0$ for all $(t, x,y,z)$ and sufficiently small $(\eps, \delta)$; 
	\item\label{r2-2} The It\^{o} integral terms are true martingales.
\end{enumerate}

For the item \ref{r2-1} we first require that $\text{I}_\eps$, $\text{I}_{\eps\delta}$ and $\text{I}_\delta$ are strictly positive. This will enable us to determine the forms of $(N_A, N_B, N_C)$ up to some constants and the formulas of $(F,G,H)$. Precisely speaking, the form of $N_A$ is determined by the necessary conditions $\average{\text{I}_\eps} > 0$, and $F$ is the solution to $\text{I}_\eps - \average{\text{I}_\eps} = 0$. $N_B, N_C, G, H$ are determined in a similar manner, and we present the detailed computations for the pair $(N_A, F)$ as follows.

Regarding $\MCO(\eps)$ terms in $Q^\pz[V^-]$,  we have $\text{I}_\eps = \Ltx(\lambda)\wtz - \MCL_y F + \rho_1 a\lambda D_1 \wthz_y - \Ltx(\lambda)[(2T-t)N_A]$, and we compute
\begin{align}
\Ltx(\lambda)\wtz &= -\half \theta(y,z) \Ltx(\lambda) D_1\vz  = -\half \theta(y,z)(\lambda^2 - \overline \lambda^2)(\half D_2 + D_1) D_1\vz, \\
\rho_1 a\lambda D_1\wthz_y &= \rho_1 a\lambda D_1 \left(\half (T-t)\rho_1 \theta_y B(z)(\half D_2 + D_1 ) D_1^2 \vz + \half \rho_1 \theta_{1y} D_1^2 \vz\right),
\end{align}
thus
\begin{align}\label{eq_Ieps}
\average{\text{I}_\eps} & = \average{\Ltx(\lambda)\wtz} + \average{\rho_1a\lambda D_1 \wthz_y}  - \Ltx(\overline{\lambda}) [(2T-t)N_A] \\
& = -\half \left(\average{\theta\lambda^2} - \average{\theta} \overline{\lambda}^2\right) (\half D_2 + D_1) D_1\vz + \half (T-t)\rho_1^2 B(z)^2 D_1 (\half D_2 + D_1) D_1^2 \vz \\
& \quad + \half \rho_1^2 B_1(z) D_1^3 \vz - \Ltx(\overline{\lambda}) [(2T-t)N_A], \label{eq:Ieps}
\end{align}
where $B_1(z)$ is define in \eqref{def:B1}. Note that all terms in \eqref{eq:Ieps} except the last one are bounded by a multiple of $D_1\vz$ by Assumption~\ref{assump_paper} and Proposition~\ref{prop_estimate}. Therefore, we can choose
\begin{equation}\label{def:NA}
N_A = C_A D_1\vz, \text{ for some constant } C_A.
\end{equation}
Then the last term in \eqref{eq:Ieps} becomes 
\begin{equation}
- \Ltx(\overline{\lambda}) [(2T-t)N_A] = -(2T-t) \Ltx(\overline{\lambda})C_AD_1\vz + C_AD_1\vz = C_AD_1 \vz
\end{equation}
 as $\Ltx(\overline{\lambda})D_1\vz = D_1\Ltx(\overline{\lambda})\vz = 0$. Thus, the choice \eqref{def:NA} indeed does the job of making $\average{\text{I}_\eps}$ positive for sufficiently large $C_A$. 

We next derive the formula for $F$. Identifying $F$ as the solution to $\text{I}_\eps - \average{\text{I}_\eps} = 0$, we have
\begin{align*}
\MCL_y F &= -\half \left(\theta\lambda^2 - \theta\overline{\lambda}^2 - \average{\theta\lambda^2} + \average{\theta}\overline{\lambda}^2\right)(\half D_2 + D_1 )D_1 \vz \\
& \quad + \half (T-t)\rho_1^2 (a\lambda \theta_y - B(z)) B(z) D_1 (\half D_2 + D_1) D_1^2 \vz + \half \rho_1^2 (a\lambda \theta_{1y} - B_1(z)) D_1^3\vz \\
& \quad - (2T -t )(\lambda^2 - \overline \lambda^2) (\half D_2 + D_1) C_A D_1 \vz,
\end{align*}
which yields a formula for $F$:
\begin{align}
F(t,x,y,z) &= - \half \left(\theta_4 - \overline\lambda^2 \theta_5\right)(\half D_2 + D_1) D_1\vz + \half (T-t) \rho_1^2\theta_1 B(z) D_1(\half D_2 + D_1) D_1^2 \vz \nonumber\\
& \quad + \half \rho_1^2 \theta_3 D_1^3 \vz - (2T-t)\theta (\half D_2 + D_1) C_A D_1\vz, \label{def:F}
\end{align}
Here $\theta_3, \theta_4$, and $\theta_5$ solve the Poisson equations \eqref{def:B1}, \eqref{def:theta4} and \eqref{def:theta5}, respectively.
With such choices of $N_A$ and $F$, we are able to let $\text{I}_\eps = \average{\text{I}_\eps} > 0$.

Regarding the $\MCO(\delta)$ terms, with the choice of $N_B = C_B D_1 \vz$, one deduces that
\begin{equation}
\average{\text{I}_\delta} = \MCL_z \vz + \rho_2 g(z)\widehat \lambda D_1 \vzo_z + N_B.
\end{equation}
By the Vega-Gamma relations \eqref{eq:vegagamma}, Assumption~\ref{assump_paper} and Proposition~\ref{prop_estimate},  it suffices to choose a large $N_B$ so that  $\average{\text{I}_\delta}$ is strictly positive. For identifying $G$ from $\text{I}_\delta - \average{\text{I}_\delta} = 0$ , we first write down
\begin{equation}
\MCL_y G = \rho_2 g (\lambda - \widehat \lambda) D_1 \vzo_z - (2T-t) (\lambda^2 - \overline{\lambda}^2) (\half D_2 + D_1) C_B D_1 \vz,
\end{equation}
and then obtain for $G$:
\begin{equation}
G(t,x,y,z) = \rho_2 g\theta_2(y,z) D_1 \vzo_z - (2T-t) \theta(y,z) (\half D_2 + D_1) C_B D_1 \vz. \label{def:G}
\end{equation}

For terms of  $\MCO(\sqrt{\eps\delta})$, a similar derivation yields
\begin{equation}
N_C = C_C D_1\vz, \text{ for sufficiently large } C_C,
\end{equation}
and
\begin{align}
H(t,x,y,z) &= \rho_2 g \theta_2 D_1 \voz_z - \half \theta_6 D_1 \vz - \half \theta_7 \overline{\lambda}\overline{\lambda}' (T-t) RR_{xx}D_1\vz \\
& \quad -\half \theta_7 (T-t) \overline \lambda \overline \lambda' (R_x - 1) D_1 \vz - \rho_1 \theta_1 \half (T-t)^2 \rho_2 \widehat\lambda \overline{\lambda}\overline{\lambda}' g D_1 (\half D_2  + D_1) D_1^2 \vz \nonumber \\
& \quad - \rho_1 \rho_2 \theta_8 g(T-t) \overline{\lambda}\overline{\lambda}' D_1^3 \vz - (2T-t) \theta (\half D_2 + D_1) C_c D_1 \vz, \label{def:H}
\end{align}
where $\theta_6, \theta_7$ and  $\theta_8$ solve the Poisson equations \eqref{def:theta6}, \eqref{def:theta7} and \eqref{def:theta8}, respectively.

The next step is to ensure that $Q^\pz[V^-] \geq 0$, i.e. the higher order terms in \eqref{eq:QpzVminus} are indeed negligible and can be dominated by terms of $\MCO(\eps + \delta)$. By straightforward but tedious calculation, with Assumption~\ref{assump_paper}, and Propositions~\ref{prop_estimate} and \ref{prop_risktolerance} we can verify that all terms higher than $\MCO(\eps + \delta)$ can be bounded $f(y,z, C_i) D_1\vz$, where the function $f(y,z, C_i)$ is bounded in $(y, z)$ and linear in $C_i$, for $i = A, B, C$. This is because $(F, G, H)$ contribute to higher order terms and their formulas contain linear functions in $(C_A, C_B, C_C)$. On the other hand, we have  $\text{I}_\eps = \average{\text{I}_\eps} \geq \tilde f(y,z) D_1\vz + C_A D_1\vz$, $\text{I}_\delta = \average{\text{I}_\delta} \geq \tilde f(y,z) D_1\vz + C_B D_1\vz$ and $\text{I}_{\eps\delta} = \average{\text{I}_{\eps\delta}} \geq \tilde f(y,z) D_1\vz + C_C D_1\vz$, which are the coefficients at order $\eps, \delta$ and $\sqrt{\eps\delta}$. The function $\tilde f(y, z)$ is also bounded in $(y, z)$ and may vary from case to case, but free of $(C_A, C_B, C_C)$. More precisely, one has the following
\begin{multline}
Q^\pz[V^-] \geq \eps (\tilde f(y,z) + C_A D_1\vz) + \delta (\tilde f(y,z) + C_B D_1\vz) + \sqrt{\eps\delta}(\tilde f(y,z) + C_C D_1\vz) \\
+  \sum_{i+j > 1} \eps^i \delta^j f(y,z, C_i) D_1\vz.
\end{multline}
Therefore, one can first choose $\eps < \eps'$ and $\delta < \delta'$ such that the coefficients of $C_i$ are positive, then for $C_i > C_i'$, $i = A, B, C$,  $Q^\pz[V^-]$ is always non-negative.

We now take care of  \ref{r2-2} . With our choice of $N_i = C_i D_1\vz$ for $i = A, B, C$ and choice of $(F, G, H)$ (\emph{cf.} equations \eqref{def:F}, \eqref{def:G}, \eqref{def:H}), we observe that terms in $V^-_x \pz \sigma = \lambda D_1 V^-$ are all of the form $h(y,z) \MCD\vz$, with $\MCD$ being the following operators:
\begin{equation*}
D_1, D_1^3, D_1^2, D_1D_2D_1^2, D_1^4, D_1D_2D_1, D_1^2D_2D_1^2, D_1^5, D_1RR_{xx}D_1, D_1(R_x-1)D_1, D_1^2\partial_z, D_1^2\partial_zD_1^2.
\end{equation*}
Under model assumptions, and with Propositions~\eqref{prop_estimate} and \eqref{prop_risktolerance} we deduce that $h(y,z)$ is always bounded and $\MCD\vz$ can be bounded by a multiple of $D_1\vz$. Therefore, for $\int_0^t V^-_x \pz \sigma  \ud W_t$ being a martingale, we essentially require that $\int_0^t D_1\vz \ud W_t$ is square integrable, which is fulfilled by Lemma \ref{lemma:pzadmissible} and \eqref{admissibility2}. Repeating this argument with similar calculation for $\int_0^t  V^-_y \frac{1}{\sqrt{\eps}}a \ud W_t^Y$ and $\int_0^t  V^-_z \sqrt{\delta} g \ud W_t^Z$, we claim such choices of $N_i = C_i D_1\vz$ and $(F, G, H)$ satisfy \ref{r2}.

\subsubsection{The requirement \ref{r1}: $U(x) \geq V^-(T, x, y, z)$}\label{sec:subr1}
With our choices of $N_i = C_i D_1\vz$ and $(F, G, H)$, \ref{r1} essentially requires
\begin{align*}
U(x) &\geq U(x) + \eps \wtz(T,x,y,z) + \eps^{3/2} \wthz(T,x,y,z) - T(\eps N_A + \delta N_B + \sqrt{\eps\delta} N_C)(T,x,z)\\
&\quad - \eps^2 F(T,x,y,z) - \eps^{3/2}\sqrt{\delta} H(T,x,y,z) - \eps\delta G(T,x,y,z),
\end{align*}
which reads as
\begin{align}
0& \geq \eps (-\half \theta D_1 \vz) + \eps^{3/2} \half \rho_1\theta_1 D_1^2 \vz - T(\eps C_A D_1\vz + \delta C_B D_1 \vz + \sqrt{\eps\delta}C_C D_1 \vz) \\
& \quad - \eps^2 \left(-\half (\theta_4 - \overline \lambda^2 \theta_5)(\half D_2 + D_1) D_1 \vz + \half \rho_1^2 \theta_3 D_1^3 \vz - T \theta (\half D_2 + D_1 )C_A D_1 \vz\right)   \\
& \quad -\eps^{3/2}\sqrt{\delta}\left(\rho_2 g \theta_2 D_1 \voz_z - \half \theta_6 D_1 \vz - T \theta (\half D_2 + D_1) C_C D_1 \vz\right) \\
& \quad -\eps\delta\left(\rho_2 g \theta_2 D_1 \vzo_z - T \theta (\half D_2 + D_1) C_B D_1 \vz\right). \label{eq:VminusU}
\end{align}
Since all  $\theta_i$ and $g$ are bounded functions and with Proposition~\ref{prop_estimate}, we can again first choose $\eps < \eps''$ and $\delta <  \delta''$ such that the coefficients of $C_i$ are negative, then for $C_i > C_i''$, $i = A, B, C$ the above inequality hold.

Finally, combining the two upper bounded for $(\eps, \delta)$ and for $C_i$, we conclude that the requirements \ref{r1}--\ref{r2} are fulfilled for $\eps \leq \eps' \wedge \eps''$, $\delta \leq \delta' \wedge \delta''$ and for $C_i \geq C_i' \vee C_i''$.

\begin{rem}
 Observing that, under our choices of $N_i$ and $(F, G, H)$,  terms with non-zero terminal values in $V^-$  are either  $\MCO(\eps)$ or $o(\eps)$, one could have chosen $(2T-t)$ for $N_A$, and $(T-t)$ for $(N_B, N_C)$ in the definition \eqref{def:Vpm} of $V^-$. This would eliminate the $C_B$ and $C_C$ terms in \eqref{eq:VminusU}, while the conclusion still holds.
\end{rem}

\subsection{Super-solution}
Similar to the derivation in Section~\ref{sec:sub}, we shall first take care of \ref{r4}. This will allow us to derive the form of $N_i$ in $V^+$ which are also given by 
\begin{equation}\label{def:Nisuper}
N_i = C_i D_1\vz, \quad i = A, B, C.
\end{equation}
With such forms, the requirement \ref{r5} is shown as a consequence of Definition~\ref{def:admissible}, and \ref{r3} will be satisfied with sufficient large choices of $C_i$ and sufficiently small $(\eps, \delta)$.

\subsubsection{The existence and non-positivity of  $\widehat Q[V^+]$  \ref{r4} }\label{sec_Qhat_multi}
Recall the definition of $Q^\pi$ in \eqref{def_Q}, the first order condition gives an optimizer of $Q^\pi[V^+]$, which we denote by $\pi^\ast$:
\begin{equation}\label{eq_piast}
\pi^\ast = - \frac{\lambda(x,y) V_x^+ + \frac{1}{\sqrt{\eps}}\rho_1 a(y) V_{xy}^+ + \sqrt{\delta}\rho_2 g(z) V_{xz}^+}{\sigma(x,y) V^+_{xx}}.
\end{equation}
The requirement \ref{r4} is equivalent to, for all $(t, x,y,z)$ and sufficiently small $(\eps, \delta)$ that:
\begin{enumerate}[ label=(\textbf{R4-\arabic*})]
	\item\label{r4-1} $Q^{\pi^\ast}[V^+] \leq 0$; 
	\item\label{r4-2} $V_{xx}^+ < 0$, so that $\pi^\ast$ is a maximizer and $\widehat Q[V^+] := \sup_\pi Q^\pi[V^+] = Q^{\pi^\ast}[V^+]$.
\end{enumerate}
To this end, let $\pi = \pi^\ast$ in the operator  $Q^\pi$ and apply it to $V^+$ (cf. \eqref{eq:Vpm}):
\begin{align}
&Q^{\pi^\ast}[V^+] \\
& = V_t^+ + \MCL_y (\wtz + \sqrt{\eps}\wthz + \sqrt{\delta} \wto + \eps F + \delta G + \sqrt{\eps\delta} H)\\
& \quad + \sqrt{\delta}\MCM_{yz}\left(\sqrt\eps \wtz + \eps \wthz + \sqrt{\eps\delta}\wto + \eps^{3/2}F + \eps \sqrt{\delta}H + \sqrt{\eps}\delta G\right) + \delta \MCL_z V^+ \nonumber \\
& \quad - \frac{1}{2\vz_{xx}}\left[\lambda \vz_x + \sqrt{\eps}(\lambda \voz_x + \rho_1 a \wtz_{xy}) + \sqrt{\delta}(\lambda\vzo_x + \rho_2 g\vz_{xz})\right. \label{eq_Vx_multi} \\
&\hspace{30pt}+ \delta (\lambda (2T-t)(N_B)_x + \rho_2 g \vzo_{xz}) +  \eps(\lambda (\wtz + (2T-t)N_A)_x + \rho_1 a \wthz_{xy})  \label{eq_Vx_multi2}\\
& \hspace{30pt}+ \left.\sqrt{\eps\delta}(\lambda (2T-t)(N_C)_x + \rho_1 a \wto_{xy} + \rho_2 g \voz_{xz}) + h.o.t.\right]^2 \label{eq_Vx_multi3}\\
& \quad \times \bigg[1 - \sqrt{\eps}\frac{\voz_{xx}}{\vz_{xx}}- \sqrt{\delta}\frac{\vzo_{xx}}{\vz_{xx}}- \eps\Big(\frac{(\wtz + (2T-t)N_A)_{xx}}{\vz_{xx}} - (\frac{\voz_{xx}}{\vz_{xx}})^2\Big) \label{eq_Vxx_multi}\\
& \hspace{30pt} -\delta\Big(\frac{(2T-t)(N_B)_{xx}}{\vz_{xx}} - (\frac{\vzo_{xx}}{\vz_{xx}})^2\Big) - \sqrt{\eps\delta}\Big(\frac{(2T-t)(N_C)_{xx}}{\vz_{xx}} - \frac{2\voz_{xx}\vzo_{xx}}{(\vz_{xx})^2}\Big)\label{eq_Vxx_multi2}\\
& \hspace{30pt}  \eps^{3/2} \MCR^1(t,x,y,z) + \eps\sqrt{\delta} \MCR^2(t,x,y,z) + \sqrt{\eps}\delta \MCR^3(t,x,y,z) + \delta^{3/2} \MCR^4(t,x,y,z)\rule{0cm}{0.7cm}\bigg] \label{eq_Vxx_multi3}\\
& = \eps \text{I}_\eps + \delta \text{I}_\delta + \sqrt{\eps\delta}\text{I}_{\eps\delta} + h.o.t. , \label{def:Vplushot}
\end{align}
where $(\text{I}_\eps, \text{I}_\delta, \text{I}_{\eps\delta})$ are given by
\begin{align}
\text{I}_\eps &= \MCL_y F + \Ltx(\lambda)(\wtz + (2T-t)N_A) - \frac{1}{2\vz_{xx}}\Big(\lambda \vz_x\frac{\voz_{xx}}{\vz_{xx}} - \lambda \voz_x - \rho_1 a \wtz_{xy}\Big)^2 \\
& \quad + \rho_1 a\lambda D_1 \wthz_y, \\
\text{I}_\delta &= \MCL_y G + \Ltx(\lambda)((2T-t)N_B) - \frac{1}{2\vz_{xx}}\Big(\lambda \vz_x\frac{\vzo_{xx}}{\vz_{xx}} - \lambda \vzo_x - \rho_2 g \vz_{xz}\Big)^2 \\
& \quad + \rho_2 g\lambda D_1 \vzo_z + \MCL_z \vz, \\
\text{I}_{\eps\delta} & =  \MCL_y H + \Ltx(\lambda)((2T-t)N_C) + \MCM_{yz} \wtz + \rho_1 a\lambda D_1 \wto_y + \rho_2 g\lambda D_1 \voz_z  \\
& \quad -\frac{1}{\vz_{xx}}\Big(\lambda \vz_x\frac{\voz_{xx}}{\vz_{xx}} - \lambda \voz_x - \rho_1 a \wtz_{xy}\Big)\Big(\lambda \vz_x\frac{\vzo_{xx}}{\vz_{xx}} - \lambda \vzo_x - \rho_2 g \vz_{xz}\Big).
\end{align}
As in the sub-solution case, we first show that with the choice \eqref{def:Nisuper}, $\average{\text{I}_\eps}, \average{\text{I}_\delta}$ and $\average{\text{I}_{\eps\delta}}$ can be strictly negative. Then by letting $(F, G, H)$ be the solution of $(\text{I}_\eps, \text{I}_\delta, \text{I}_{\eps\delta}) - (\average{\text{I}_\eps}, \average{\text{I}_\delta}, \average{\text{I}_{\eps\delta}}) = 0$, we have terms at $\MCO(\eps + \delta)$ are strictly negative. The computation and reasoning are very similar to the sub-solution case, thus we omit here and summarize the results. 

Regarding $\MCO(\eps)$ terms, one has
\begin{align}
\average{\text{I}_\eps}  & = -N_A  +  \average{\Ltx(\lambda)\wtz}  + \average{\rho_1 a \lambda D_1\wth_y} \\
& \quad  - \frac{1}{2\vz_{xx}}\bigg(\overline \lambda^2 \Big(\vz_x\frac{\voz_{xx}}{\vz_{xx}}-\voz_x\Big)^2 + \rho_1 B(z) \Big(\vz_x\frac{\voz_{xx}}{\vz_{xx}}-\voz_x\Big) (D_1\vz)_x \\
& \quad + \frac{1}{4}\rho^2_1 \average{a^2\theta'^2}[(D_1\vz)_x]^2\bigg). \\
F(t, &x, y, z)  \\
&= -\half (T-t) \rho_1^2 \theta_1(y,z) B(z)D_1(\half D_2 + D_1) D_1^2\vz - \half \rho_1^2 \theta_3(y, z) D_1^3\vz \\
&\quad  + \half (\theta_4(y,z)-\overline \lambda^2\theta_5(y,z))(\half D_2 + D_1) D_1\vz - (2T-t) \theta(y, z) (\half D_2 + D_1)C_AD_1\vz \nonumber\\
& \quad + \frac{1}{2\vz_{xx}}\bigg(\theta(y, z) \Big(\vz_x\frac{\voz_{xx}}{\vz_{xx}}-\voz_x\Big)^2 + \rho_1 \theta_1(y, z) \Big(\vz_x\frac{\voz_{xx}}{\vz_{xx}}-\voz_x\Big) (D_1\vz)_x  \\
& \hspace{40pt}  + \frac{1}{4}\rho_1^2 \theta_9(y, z) [(D_1\vz)_x]^2\bigg),
\end{align}
where $\theta_9(y, z)$ solves the Poisson equation \eqref{def:theta9}.

For terms of  $\MCO(\delta)$, we have
\begin{align}
\average{\text{I}_\delta} &= \MCL_z \vz + \rho_2 g(z)\widehat \lambda D_1 \vzo_z - N_B  \\
& \quad - \frac{1}{2\vz_{xx}}\bigg(\overline \lambda^2 \Big(\vz_x\frac{\vzo_{xx}}{\vz_{xx}}-\vzo_x\Big)^2 - 2 \rho_2 g \widehat \lambda \Big(\vz_x\frac{\vzo_{xx}}{\vz_{xx}}-\vzo_x\Big) \vz_{xz}  + \rho^2_2g^2 [\vz_{xz}]^2\bigg).\\
G(t,&x,y,z) \\
&= -\rho_2 g\theta_2(y,z) D_1 \vzo_z - (2T-t) \theta(y,z) (\half D_2 + D_1) C_B D_1 \vz \\
& \quad + \frac{1}{2\vz_{xx}}\bigg(\theta(y, z) \Big(\vz_x\frac{\vzo_{xx}}{\vz_{xx}}-\vzo_x\Big)^2 - 2\rho_2 g \theta_2(y,z) \Big(\vz_x\frac{\vzo_{xx}}{\vz_{xx}}-\vzo_x\Big) \vz_{xz}\bigg).
\end{align}

Regarding $\MCO(\sqrt{\eps\delta})$ terms, we deduce
\begin{align}
\average{\text{I}_{\eps\delta}} &= \average{\MCM_{yz}\wtz} + \rho_1 \average{a\lambda D_1 \wto_y} + \rho_2 g(z)\widehat \lambda D_1 \voz_z - N_C  \\
& \quad - \frac{1}{\vz_{xx}}\bigg(\overline \lambda^2 \Big(\vz_x\frac{\voz_{xx}}{\vz_{xx}}-\voz_x\Big)\Big(\vz_x\frac{\vzo_{xx}}{\vz_{xx}}-\vzo_x\Big)  - \rho_2 g \widehat \lambda \Big(\vz_x\frac{\voz_{xx}}{\vz_{xx}}-\voz_x\Big) \vz_{xz}\bigg) \\
& \quad - \frac{1}{\vz_{xx}}\bigg(\half \rho_1 B(z)  \Big(\vz_x\frac{\vzo_{xx}}{\vz_{xx}}-\vzo_x\Big) (D_1\vz)_x - \half \rho_1\rho_2 g \average{\theta_y} \vz_{xz}(D_1\vz)_x\bigg).
\end{align}
with the first two terms calculated as
\begin{align}
\average{\MCM_{yz}\wtz} & = -\half \rho_1 g \left(\average{a\theta_{yz}} D_1 \vz + \average{a\theta_y}(D_1\vz)_z\right), \\
\average{a\lambda D_1 \wto_y} & = -\half (T-t)^2 \rho_2 B(z) \widehat \lambda \overline \lambda \overline \lambda' g(z) D_1 (\half D_2 + D_1) D_1^2 \vz\\
& \hspace{100pt} - \rho_2 \average{a\lambda \theta_{2y}} g(z) (T-t) \overline \lambda \overline \lambda' D_1^3 \vz.
\end{align}
Then the function $H(t, x, y, z)$ is identified as:
\begin{align}
H &= -\half \rho_1 g \theta_{10}(y,z) D_1 \vz +\half \rho_1 g \theta_{11}(y,z)(D_1\vz)_z \\
&\quad + \rho_1 \half (T-t)^2 \rho_2 \theta_1(y,z) \widehat \lambda \overline \lambda \overline \lambda' g(z) D_1 (\half D_2 + D_1) D_1^2 \vz   - \rho_2 g(z) \theta_2(y,z) D_1 \voz_z\\
& \quad + \rho_1\rho_2 \theta_8(y,z) g(z) (T-t)\overline \lambda \overline{\lambda}' D_1^3\vz  - (2T-t) \theta(y,z)(\half D_2 + D_1) C_C D_1 \vz \\
& \quad + \frac{1}{\vz_{xx}}\bigg(\theta(y,z)\Big(\vz_x\frac{\voz_{xx}}{\vz_{xx}}-\voz_x\Big)\Big(\vz_x\frac{\vzo_{xx}}{\vz_{xx}}-\vzo_x\Big)  \\
& \quad \quad \quad \quad \quad \quad - \rho_2 g \theta_2(y,z) \Big(\vz_x\frac{\voz_{xx}}{\vz_{xx}}-\voz_x\Big) \vz_{xz}\bigg) \\
& \quad + \frac{1}{\vz_{xx}}\bigg(\half \rho_1 \theta_1(y,z)  \Big(\vz_x\frac{\vzo_{xx}}{\vz_{xx}}-\vzo_x\Big) (D_1\vz)_x - \half \rho_1\rho_2 g \theta_7(y,z) \vz_{xz}(D_1\vz)_x\bigg).
\end{align}
where $\theta_{10}$ and $\theta_{11}$ solve the Poisson equations \eqref{def:theta10} and \eqref{def:theta11} respectively.

Next, we show that all high order terms in \eqref{def:Vplushot} can be eliminated by the strictly negative terms $\eps \text{I}_\eps + \delta \text{I}_\delta + \sqrt{\eps\delta} \text{I}_{\eps\delta}$ at order $\eps + \delta$ by increasing $N_i$, thus \ref{r4-1} is fulfilled. To proceed further, we need the following lemmas, which are obtained by lengthy but straightforward calculations, and thus omitted. 
\begin{lem}\label{lem_diff_multi}
	Under standing assumptions, we have the following estimates:
	\begin{equation}
		(\MCD \vz)_{xx} \leq f(y,z) \vz_{xx}, \quad   (\MCD \vz)_x \leq f(y,z) \vz_x,
	\end{equation}	
	where $f(y, z)$ is a bounded function, and $\MCD$ takes the following operators:
	\begin{equation}
	D_1, D_1^2, D_1^3, D_1^4, D_2D_1, D_2D_1^2, D_1D_2D_1^2.
	\end{equation}
\end{lem}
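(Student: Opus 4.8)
The plan is to put every $\MCD\vz$ into the normalized form $\MCD\vz = \Psi_{\MCD}\,D_1\vz$, where $R = R(t,x;\aves(z)) = -\vz_x/\vz_{xx}$ and $\Psi_{\MCD} = \Psi_{\MCD}(t,x,z)$ is a polynomial in the scale-invariant quantities $R_x,\ RR_{xx},\ R^2R_{xxx},\ \dots,\ R^{j-1}\partial_x^{j}R$. All of these blocks, and also $R\partial_x$ or $R^2\partial_x^2$ applied to any polynomial in them, are uniformly bounded by Proposition~\ref{prop_risktolerance} (and Assumption~\ref{assump_U}): a single $\partial_x$ lowers the ``$R$-homogeneity'' of such a polynomial by one, indeed $\partial_x\big(R^{j-1}\partial_x^{j}R\big) = \tfrac1R\big[(j-1)R_x\,R^{j-1}\partial_x^j R + R^{j}\partial_x^{j+1}R\big]$, so it is exactly compensated by one factor of $R$; consequently $\Psi_{\MCD}$, $R\,\partial_x\Psi_{\MCD}$ and $R^2\,\partial_x^2\Psi_{\MCD}$ are all bounded, provided the highest $x$-derivative of $R$ occurring stays within the range $\{\,\partial_x^{(j)}R^j:1\le j\le 7\,\}$ covered by Proposition~\ref{prop_risktolerance}. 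For the operators in the list --- the deepest being $D_1D_2D_1^2$ --- the normalized form involves at most $\partial_x^4 R$, leaving two spare derivatives for the two differentiations required in the lemma.

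Granting this, the lemma follows from the product rule together with the elementary identities $R\vz_x = D_1\vz$, $R\vz_{xx} = -\vz_x$, $(D_1\vz)_x = (R_x-1)\vz_x$ and $(D_1\vz)_{xx} = \big[(R_x-1)-RR_{xx}\big]\vz_{xx}$. Writing $\MCD\vz = \Psi\,D_1\vz$ with $\Psi:=\Psi_{\MCD}$,
\begin{align*}
(\MCD\vz)_x &= \Psi_x\,D_1\vz + \Psi\,(D_1\vz)_x = \big(R\Psi_x\big)\vz_x + \Psi(R_x-1)\vz_x,\\
(\MCD\vz)_{xx} &= \Psi_{xx}\,D_1\vz + 2\Psi_x\,(D_1\vz)_x + \Psi\,(D_1\vz)_{xx}\\
&= -\big(R^2\Psi_{xx}\big)\vz_{xx} - 2\big(R\Psi_x\big)(R_x-1)\vz_{xx} + \Psi\big[(R_x-1)-RR_{xx}\big]\vz_{xx}.
\end{align*}
Since $\Psi$, $R\Psi_x$, $R^2\Psi_{xx}$, $R_x$ and $RR_{xx}$ are bounded, the coefficient of $\vz_x$ on the first line and that of $\vz_{xx}$ on the second line are bounded functions, which we absorb into a single $f$; the (vacuous) $y$-dependence and the $z$-dependence are harmless, giving $(\MCD\vz)_x \le f(y,z)\vz_x$ and $(\MCD\vz)_{xx} \le f(y,z)\vz_{xx}$.

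The one non-trivial point --- and the ``lengthy but straightforward'' content the statement alludes to --- is the verification of the normalized form and of the boundedness claims for all seven operators. This is a finite induction on the depth of $\MCD$: one checks that $D_1$ and $D_2$ map an expression of the form (polynomial in the scale-invariant blocks)$\,\times D_1\vz$ to another such expression, by expanding $D_1(\Psi\,D_1\vz) = R\partial_x(\Psi R\vz_x)$ and $D_2(\Psi\,D_1\vz) = R^2\partial_x^2(\Psi R\vz_x)$ with Leibniz and repeatedly using $R\vz_{xx}=-\vz_x$; what must be confirmed is that (i) every produced factor of $R$ is consumed either by a derivative of a block or by recombining $R\vz_x = D_1\vz$, so that no bare (unbounded, $R\le K_0x$) factor survives, and (ii) the order of $x$-derivatives of $R$ never exceeds seven. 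Both are immediate from the degree count above, but they have to be carried out term by term for $D_2D_1^2$ and $D_1D_2D_1^2$; everything else reduces to the two-line computation of the previous paragraph together with Propositions~\ref{prop_estimate} and~\ref{prop_risktolerance}.
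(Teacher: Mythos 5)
Your argument is correct, and it supplies precisely the ``lengthy but straightforward'' computation that the paper omits, along the same route the paper intends: reduce every $\MCD\vz$ to $\Psi\,D_1\vz$ with $\Psi$ a polynomial in the scale-invariant blocks $R^{j-1}\partial_x^{(j)}R$, and invoke Proposition~\ref{prop_risktolerance} (here with $j\le 6$ after the two extra differentiations, well within the range $j\le 7$) together with $R\vz_{xx}=-\vz_x$ to absorb all factors of $R$. The closure of block-polynomials under $R\partial_x$ and the derivative count for the deepest operator $D_1D_2D_1^2$ are verified correctly, so no gap remains.
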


\begin{lem}
	Under standing assumptions, we have the following estimates (with bounded $f(y,z)$):
	\begin{align}
	&	(D_1 \vzo_z)_{xx} \leq f(y,z) \vz_{xx}, \quad (D_1\vzo_z)_{x} \leq f(y,z) \vz_x,\\
	&	(D_1 \voz_z)_{xx} \leq f(y,z) \vz_{xx}, \quad (D_1\voz_z)_{x} \leq f(y,z) \vz_x, \\
	&	(D_1 \vz)_{zxx} \leq f(y,z) \vz_{xx}, \quad (D_1\vz)_{x} \leq f(y,z) \vz_x
	\end{align}	
\end{lem}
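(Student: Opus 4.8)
The plan is to reduce each of the six quantities to a finite sum of terms of the form $b(y,z)\,\MCD\vz$, where $b$ is a bounded function built from $\aves$, $\aves'$, $\aves''$, $\widehat\lambda$, $\widehat\lambda'$, $g$, $g'$, $B$ and $B'$ (all bounded under Assumption~\ref{assump_paper}) together with finitely many derivatives of the risk-tolerance function $R=R(t,x;\aves(z))$, and $\MCD$ is one of the composite operators already appearing in Lemma~\ref{lem_diff_multi}; here and below $f(y,z)$ denotes a generic bounded function that may change from line to line. The inputs are the explicit representations \eqref{def:voz}, \eqref{def:vzo} of $\voz$ and $\vzo$ in terms of $\vz$ and $R$, and the Vega--Gamma relations \eqref{eq:vegagamma}, namely $\vz_z=(T-t)\aves\aves'\,D_1\vz$ and $R_z=(T-t)\aves\aves'\,R^2R_{xx}$. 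The second relation is the only source of non-commutativity we must track: $\partial_x$ and $\partial_z$ commute with one another, but for any smooth $\varphi$
\[
\partial_z\!\left(R^k\partial_x^k\varphi\right)=k(T-t)\aves\aves'\,R R_{xx}\,D_k\varphi+R^k\partial_x^k\varphi_z ,
\]
so each application of $\partial_z$ to a $D_k$ produces one bounded factor $R R_{xx}$ plus a genuine $z$-derivative pushed onto $\varphi$.

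First I would dispose of $(D_1\vz)_x$, which is already contained in Lemma~\ref{lem_diff_multi}, and $(D_1\vz)_{zxx}$: combining the display above (with $k=1$, $\varphi=\vz$) and $\vz_z=(T-t)\aves\aves'D_1\vz$ gives $(D_1\vz)_z=(T-t)\aves\aves'\bigl(R R_{xx}D_1\vz+D_1^2\vz\bigr)$, so applying $\partial_{xx}$ and bounding $(D_1\vz)_{xx}$, $(D_1^2\vz)_{xx}$ by $f(y,z)\vz_{xx}$ through Lemma~\ref{lem_diff_multi} while controlling the terms in which $\partial_{xx}$ falls on $R R_{xx}$ via Proposition~\ref{prop_risktolerance} yields the bound. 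For $D_1\voz_z$ I would substitute $\voz=-\half(T-t)\rho_1B(z)D_1^2\vz$ from \eqref{def:voz}, differentiate once in $z$ (the $z$-dependence sits both in $B$ and in the $D_1$'s acting on $\vz$), collect the result as a bounded-prefactor combination of $D_1\vz$, $D_1^2\vz$, $D_1^3\vz$-type terms together with $R R_{xx}$-weighted such terms, then apply $D_1=R\partial_x$ and finally $\partial_x$ and $\partial_{xx}$; the treatment of $D_1\vzo_z$ is the same, starting from $\vzo=\half(T-t)^2\rho_2\widehat\lambda\aves\aves'g\,D_1^2\vz$ in \eqref{def:vzo}, only now two $z$-derivatives are carried onto $\vz$ and $R$. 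At the final stage the mixed quantities $RR_{xxz}$, $R^2R_{xxxz}$, $R_{xzz}$, $RR_{xxzz}$, $R^2R_{xxxzz}$ appear, and these are exactly the ones Proposition~\ref{prop_risktolerance} asserts to be uniformly bounded; combined with the reduction $D_1^iD_2^jD_1^k\vz\le h(z)D_1\vz$ of Proposition~\ref{prop_estimate} and the identities $D_1\vz=R\vz_x$, $D_2\vz=R^2\vz_{xx}$ (using $\vz_x>0>\vz_{xx}$ for the signs), every term collapses to a bounded multiple of $\vz_x$ after one $x$-derivative and of $\vz_{xx}$ after two.

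The main obstacle is the combinatorial bookkeeping, not any new idea: one must verify that differentiating $\voz_z$ or $\vzo_z$ by $D_1$ and then by up to two $\partial_x$'s never generates a derivative of $R$ outside the lists covered by Assumption~\ref{assump_U}, Proposition~\ref{prop_estimate} and Proposition~\ref{prop_risktolerance} --- in particular no $\partial_x^jR^j$ with $j\ge8$ and no mixed derivative of $R$ in $z$ and $x$ beyond the five listed above. Since $\voz$ and $\vzo$ already carry two $D_1$'s and we add one $\partial_z$ and up to two $\partial_x$, the worst terms are of $D_1^iD_2^jD_1^k\vz$-type with up to five $x$-derivatives landing on $R$ at various stages, which is precisely at the edge of what the standing assumptions guarantee; this is exactly why the statement is phrased with the specific list of operators $\MCD$, and why --- as with Lemma~\ref{lem_diff_multi} --- the lengthy but mechanical verification is naturally omitted.
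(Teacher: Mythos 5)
Your proposal is correct and follows exactly the route the paper intends: the paper omits this proof as ``lengthy but straightforward calculations,'' and your reduction via the explicit formulas \eqref{def:voz}, \eqref{def:vzo}, the Vega--Gamma relations \eqref{eq:vegagamma} (including the commutation of $\partial_z$ with $D_k$ through $R_z=(T-t)\aves\aves'R^2R_{xx}$), Lemma~\ref{lem_diff_multi}, and the boundedness lists in Assumption~\ref{assump_paper} and Propositions~\ref{prop_estimate}--\ref{prop_risktolerance} is precisely that omitted computation. No discrepancy with the paper's approach.
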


\begin{lem}\label{lem_diffFGH_multi}
	Under standing assumptions, we have the following estimates (with bounded $f(y,z)$):
	\begin{align}
	& F_{x} \leq f(y,z) \vz_{x}, G_{x} \leq f(y,z) \vz_{x} \text{ and } H_{x} \leq f(y,z) \vz_{x}, \\
	&	F_{xx} \leq f(y,z) \vz_{xx}, G_{xx} \leq f(y,z) \vz_{xx} \text{ and } H_{xx} \leq f(y,z) \vz_{xx}.
	\end{align}	
\end{lem}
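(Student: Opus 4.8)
\textbf{Proof proposal for Lemma~\ref{lem_diffFGH_multi}.}

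The plan is to differentiate the explicit formulas \eqref{def:F}, \eqref{def:G}, \eqref{def:H} term by term in $x$ (once and twice) and bound each resulting piece by $f(y,z)\vz_x$ (resp.\ $f(y,z)\vz_{xx}$) using the chain of estimates already available: Proposition~\ref{prop_estimate}, Proposition~\ref{prop_risktolerance}, and the two preceding lemmas (Lemma~\ref{lem_diff_multi} and the intermediate lemma on $z$-derivatives). The key structural observation is that every summand of $F$, $G$, $H$ has the schematic form $\phi(y,z)\cdot(T\text{-polynomial})\cdot\MCD\vz$, where $\phi$ is one of the bounded Poisson solutions $\theta, \theta_i$ (or products thereof with bounded $g$, $\overline\lambda$, $\overline\lambda'$, $\widehat\lambda$, $\rho$'s), and $\MCD$ is a composition of the operators $D_1, D_2, \partial_z$ of the type covered by Lemmas~\ref{lem_diff_multi} and the intermediate lemma, applied to $\vz$. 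Since $\phi(y,z)$ and the time-polynomial coefficients are bounded uniformly on $[0,T]$ and do not depend on $x$, differentiating in $x$ only hits the factor $\MCD\vz$, and the required bound follows by invoking the relevant estimate $(\MCD\vz)_x\le f(y,z)\vz_x$ or $(\MCD\vz)_{xx}\le f(y,z)\vz_{xx}$.

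Concretely, I would proceed in three steps. First, for $F$: the terms involving $\theta_4-\overline\lambda^2\theta_5$, $\theta_1 B(z)$, $\theta_3$, and $\theta(\tfrac12 D_2+D_1)C_A$ multiply, respectively, $(\tfrac12 D_2+D_1)D_1\vz$, $D_1(\tfrac12 D_2+D_1)D_1^2\vz$, $D_1^3\vz$, and $D_1\vz$ — all of which, after one or two $x$-derivatives, are controlled by Lemma~\ref{lem_diff_multi} (the operators $D_1, D_1^2, D_1^3, D_1^4, D_2D_1, D_2D_1^2, D_1D_2D_1^2$ listed there are exactly the ones appearing). The only slightly nonstandard piece in $F$ is the super-solution version containing $\tfrac{1}{\vz_{xx}}$ times squares/products of $\big(\vz_x\tfrac{\voz_{xx}}{\vz_{xx}}-\voz_x\big)$ and $(D_1\vz)_x$; here I would use $\voz_x\le h(y,z)\vz_x$, $\voz_{xx}\le h(y,z)\vz_{xx}$ from Proposition~\ref{prop_estimate} together with the risk-tolerance bounds of Proposition~\ref{prop_risktolerance} (so that e.g.\ $(D_1\vz)_x = (R\vz_x)_x = R_x\vz_x + R\vz_{xx}$ is $\le f(y,z)\vz_x$ after using $0\le R\le K_0 x$ and $R_x$ bounded, but more precisely one keeps it in the form $h\vz_x$), and differentiate the quotient carefully, noting that each extra $x$-derivative that lands on $1/\vz_{xx}$ produces a factor $\vz_{xxx}/\vz_{xx}^2$ which combines with the $D$-operator structure to stay at the claimed order. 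Second and third, the same scheme applies verbatim to $G$ and $H$: $G$'s two terms are $\rho_2 g\theta_2 D_1\vzo_z$ and $(2T\text{-}t)\theta(\tfrac12 D_2+D_1)C_B D_1\vz$, handled by the intermediate lemma's estimate $(D_1\vzo_z)_x\le f(y,z)\vz_x$ and by Lemma~\ref{lem_diff_multi}; and $H$'s terms are of the same three types — bounded Poisson solution times time-polynomial times one of $D_1\vz$, $D_1^3\vz$, $D_1(\tfrac12 D_2+D_1)D_1^2\vz$, $(D_1\vz)_z$, $D_1\voz_z$, $D_1\vzo_z$, or the quotient expressions — each of which is covered by one of the cited estimates.

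The main obstacle is purely bookkeeping: ensuring that the operator $\MCD$ obtained after differentiating each summand of $F$, $G$, $H$ twice in $x$ really does fall into the finite list for which Lemma~\ref{lem_diff_multi} (and the intermediate lemma) guarantee the bound, and that the $z$-derivatives appearing inside $H$ (through $(D_1\vz)_z$, $D_1\voz_z$, $D_1\vzo_z$) do not generate uncontrolled mixed derivatives — this is where the uniform bounds on $RR_{xxz}$, $R^2R_{xxxz}$, $R_{xzz}$, $RR_{xxzz}$, $R^2R_{xxxzz}$ from Proposition~\ref{prop_risktolerance} are essential. Once one checks that the collection of operators is closed (up to the stated order) under the single extra $x$-derivatives needed, the estimates are immediate; hence, as stated in the paper, the proof is "lengthy but straightforward" and I would present only the reduction to Lemma~\ref{lem_diff_multi}, the intermediate lemma, and Propositions~\ref{prop_estimate}--\ref{prop_risktolerance}, relegating the term-by-term verification to a remark or omitting it.
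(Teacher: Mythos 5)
Your proposal is correct and follows essentially the route the paper itself intends: the paper omits this proof as ``lengthy but straightforward calculations,'' and your term-by-term differentiation of the explicit formulas for $F$, $G$, $H$, reducing each summand (bounded Poisson solution times time-polynomial times a $\MCD\vz$-type factor or the $1/\vz_{xx}$ quotient terms) to Lemma~\ref{lem_diff_multi}, the intermediate lemma, and Propositions~\ref{prop_estimate}--\ref{prop_risktolerance}, is exactly that calculation. Your treatment of the delicate quotient pieces via $(D_1\vz)_x=(R_x-1)\vz_x$ and the risk-tolerance bounds is the right mechanism, so no gap beyond the bookkeeping the paper itself elects to omit.
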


Following Lemmas~\ref{lem_diff_multi}--\ref{lem_diffFGH_multi} and Proposition~\ref{prop_estimate}--\ref{prop_risktolerance}, the terms in \eqref{eq_Vx_multi}--\eqref{eq_Vx_multi3} are bounded by a multiple of $\vz_x$, and terms in \eqref{eq_Vxx_multi}--\eqref{eq_Vxx_multi2} are bounded by constants, both depending linearly in $C_i$. Terms in \eqref{eq_Vxx_multi3} 
are bounded in $(t, x, y, z)$ for any $\eps < \eps'$, $\delta < \delta'$ and $C_i > C_i'$, due to the boundedness of $v^{(i,j)}_{xx}/ \vz_{xx}$, $w^{(i,j)}/\vz_{xx}$, etc, and the asymptotically behavior $\MCO(1/C_i)$ as $C_i \to \infty$, $i = A, B, C$. Other terms in \eqref{def:Vplushot} involving $\partial_t$, $\MCM_{yz}$ and $\MCL_z$ can be are verified by direct differentiations. To summarize, all terms higher than $\MCO(\eps + \delta)$ in $ Q^{\pi^\ast}[V^+]$ are bounded by functions of the form $\eps^\alpha\delta^\beta f(y,z) D_1\vz$, where $\alpha + \beta \geq 3/2$, and $f(y,z)$ is bounded in $(y,z)$ and at most linearly growth in $C_i$. Thus, one can first choose small $\eps < \eps''$ and $\delta < \delta''$ so that the coefficient in front of $C_i$ are negative, then by letting $C_i > C_i''$, we will have $Q^{\pi^\ast}[V^+] \leq 0$.

Lastly, we show \ref{r4-2}, that is, $V_{xx}^+ < 0$. Observing that 
\begin{align}
V_{xx}^+ &= \vz_{xx}\bigg(1 + \sqrt\eps \frac{\voz_{xx}}{\vz_{xx}} + \sqrt{\delta} \frac{\vzo_{xx}}{\vz_{xx}} + \eps \frac{\wtz_{xx}}{\vz_{xx}} + \eps^{3/2}\frac{\wthz_{xx}}{\vz_{xx}} + \eps\sqrt{\delta}\frac{\wto_{xx}}{\vz_{xx}} + \eps^2 \frac{F_{xx}}{\vz_{xx}} + \eps\delta \frac{G_{xx}}{\vz_{xx}}  \\
&\hspace{30pt} + \eps^{3/2}\sqrt{\delta} \frac{H_{xx}}{\vz_{xx}} + (2T-t)\Big(\frac{\eps (N_A)_{xx} + \delta (N_B)_{xx} + \sqrt{\eps\delta}(N_C)_{xx} }{\vz_{xx}}\Big)\bigg), \label{eq_Vxx}
\end{align}
and we recall that $\vz_{xx} <0$ by the concavity of  classic Merton problem. As a consequence of Lemmas~\ref{lem_diff_multi}--\ref{lem_diffFGH_multi}, all ratios in \eqref{eq_Vxx} are bounded in $(y, z)$. Therefore, for given $C_i$, $i = A, B, C$, one can choose sufficiently small $\eps < \eps'''$ and $\delta < \delta'''$ such that the sum in the paraphrases stays positive, and consequently $V^+_{xx} < 0$ for all $(t,x,y,z)$. 

Now by first taking $C_i = \max\{C_i', C_i''\} + 1$, then determine $\eps'''$ and $\delta'''$, and finally taking $\eps < \min\{\eps', \eps'', \eps'''\}$, $\delta < \min\{\delta', \delta'', \delta'''\}$, \ref{r4} is fulfilled.

\subsubsection{The requirement \ref{r3}: $U(x) \leq V^+(T, x, y, z)$}
The argument here is parallel to Section~\ref{sec:subr1}. So one can first choose $\eps < \eps''''$ and $\delta < \delta''''$ and then $C_i > C_i'''$ so that \ref{r3} holds.

Therefore, to let both $\ref{r3}$ and $\ref{r4}$ hold, we need first take $C_i = \max\{C_i', C_i'', C_i'''\} + 1$, then determine $\eps'''$ and $\delta'''$ so that \eqref{eq_Vxx} is negative, and finally take $\eps < \min\{\eps', \eps'', \eps''', \eps''''\}$, $\delta < \min\{\delta', \delta'', \delta''', \delta''''\}$.

\subsubsection{The martingality of It\^{o} integral terms \ref{r5}}

Following Lemma~\ref{lem_diff_multi}--\ref{lem_diffFGH_multi}, all functions $v^{(i,j)}_x$, $w^{(i,j)}_x$, $(N_A, N_B, N_C)_x$ and $(F, G, H)_x$ are bounded by $f(y, z)\vz_x$, where $f(y,z)$ is a bounded function in $(y, z)$. Therefore, for a given $\pi$, the first It\^{o} integral is a true martingale if
\begin{equation}
\EE\int_0^T \left(\pi(t, X_t^\pi, Y_t, Z_t)\sigma(Y_t, Z_t)\vz_x(t, X_t^\pi, Y_t, Z_t)\right)^2 \ud t < \infty,
\end{equation}
which is automatically satisfied by any admissible control $\pi$ by \eqref{admissibility1} (cf. Definition~\ref{def:admissible}).

For the rest two to be true martingales, we need
\begin{equation}
\EE\int_0^T \left(D_1\vz(t, X_t^\pi, Y_t, Z_t)\right)^2 \ud t < \infty,
\end{equation}
which is part of the definition of admissibility \eqref{admissibility2}. Therefore, we obtain the desired result.

\section{Conclusion}\label{sec:conclusion}
This paper provides the accuracy analysis of asymptotics for the portfolio optimization problem with general utility functions and two (fast and slow) stochastic factors. This sets up the theoretical foundation of using asymptotic expansion to derive approximations for value functions and optimal strategies in the regime where these factors are running on both slow and fast timescales. Specifically, we construct the sub- and super-solutions to the fully nonlinear problem so that their difference is at the desired level of accuracy. In the present context, the fast varying factor requires a careful treatment of the singular perturbation for a fully nonlinear equation. Moreover, the proofs presented here can be adapted to justify other derivations of accuracy in various contexts as in \cite{fouque2021optimal} for instance.

\section*{Acknowledgment} 

The authors are grateful to Professor Thaleia Zariphopoulou for useful discussions on the admissibility of $\pz$ and to Professor Maxim Bichuch. JPF was supported by NSF grant  DMS-1814091. RH was partially supported by the NSF grant DMS-1953035, and the Faculty Career Development Award, the Research Assistance Program Award, and the Early Career Faculty Acceleration funding at UCSB.

\bibliographystyle{plain}
\bibliography{Reference}

\end{document}